\documentclass[11pt,a4paper]{article}
\usepackage[a4paper,margin=1in]{geometry}
\usepackage[utf8]{inputenc}
\usepackage[T1]{fontenc}
\usepackage{times}
\usepackage{microtype}
\usepackage[numbers,sort&compress]{natbib}
\usepackage{graphicx}
\usepackage{amsmath,amssymb,amsfonts,mathtools}
\usepackage{amsthm}
\usepackage{algpseudocode}
\usepackage{multirow}
\usepackage{textcomp}
\usepackage[dvipsnames]{xcolor}
\usepackage{comment,xspace}
\usepackage{booktabs}
\usepackage{tikz}
\usetikzlibrary{matrix,positioning,arrows.meta,calc,shapes.geometric,trees}
\usepackage{pgfplots}
\usepgfplotslibrary{groupplots}
\pgfplotsset{compat=1.17}
\usepackage[caption=false,font=footnotesize]{subfig}
\usepackage[ruled,procnumbered,linesnumbered,noend]{algorithm2e}
\usepackage{authblk}
\usepackage[colorlinks=true,linkcolor=blue,citecolor=blue,urlcolor=blue]{hyperref}
\usepackage{orcidlink}

\theoremstyle{plain}
\newtheorem{theorem}{Theorem}
\theoremstyle{definition}
\newtheorem{definition}{Definition}
\theoremstyle{remark}
\newtheorem{remark}{Remark}
\theoremstyle{plain}
\SetKwRepeat{Do}{do}{while}

\newcommand{\capheu}{{\textsc{cap}}\xspace}
\def\bigO{\ensuremath{\mathcal{O}}\xspace}
\newcommand{\blocco}[1]{\par\noindent\textbf{#1.}}

\newcommand{\nc}{\textsc{nc}\xspace}
\newcommand{\nout}{\ensuremath{N^+}\xspace}
\newcommand{\nin}{\ensuremath{N^-}\xspace}

\newcommand{\spksisp}{Single-Pair Top-$k$ Simple Shortest Paths\xspace}

\newcommand{\ssksispshort}{\textsc{\textmd{S-kssp}}\xspace}
\newcommand{\spksispshort}{\textsc{\textmd{P-kssp}}\xspace}

\newcommand{\build}{\textsc{build-exp}\xspace}
\newcommand{\buildwc}{\textsc{build-lin}\xspace}
\newcommand{\addrecur}{\textsc{addRecur}\xspace}
\newcommand{\dynk}{\textsc{upd}\xspace}
\newcommand{\yenalgo}{\textsc{yen}\xspace}
\newcommand{\merge}{\textsc{merge}\xspace}
\newcommand{\dij}{\textsc{shortest}\xspace}

\newcommand{\found}{\ensuremath{\textsc{found}\xspace}}

\newcommand{\sol}{\ensuremath{\mathcal{C}}\xspace}
\newcommand{\peek}{\textsc{peek}\xspace}
\newcommand{\dynkyen}{\dynk-\textsc{y}\xspace}
\newcommand{\dynkpnc}{\dynk-\textsc{p}\xspace}
\newcommand{\expand}{\textsc{expand}\xspace}
\newcommand{\wg}[1]{\ensuremath{\omega(#1)}\xspace}

\SetKwComment{tcp}{$\triangleright$ }{}%

\SetCommentSty{mycommfont}

\newcommand{\true}[2][black,fill=black]{\tikz[baseline=-0.5ex]\draw[#1,radius=#2] (0,0) circle ;}
\newcommand{\false}[2][black,fill=white]{\tikz[baseline=-0.5ex]\draw[#1,radius=#2] (0,0) circle ;}

\newcommand{\maxdeg}{\ensuremath{d_{\textsc{max}}}\xspace}

\newcommand{\avgdeg}{\ensuremath{d_{\textsc{avg}}}\xspace}
\newcommand{\diam}{\ensuremath{\Delta}\xspace}

\newcommand{\forest}{\textsc{ForestFire}\xspace}
\newcommand{\forestss}{\textsc{ffi}\xspace}

\newcommand{\caida}{\textsc{Caida}\xspace}
\newcommand{\caidass}{\textsc{cai}\xspace}
\newcommand{\efz}{\textsc{SwissRail}\xspace}
\newcommand{\efzss}{\textsc{efz}\xspace}

\newcommand{\dnk}{\textsc{denmark}\xspace}
\newcommand{\dnkss}{\textsc{dnk}\xspace}
\newcommand{\epini}{\textsc{Epinions}\xspace}
\newcommand{\epiniss}{\textsc{epi}\xspace}

\newcommand{\lux}{\textsc{Luxembourg}\xspace}
\newcommand{\luxss}{\textsc{lux}\xspace}

\newcommand{\pnc}{\textsc{pnc}\xspace}

\title{Towards Dynamic Top-$k$ Simple Shortest Paths\thanks{A preliminary version of this paper appears in the proceedings of the 14th International Symposium on Computing and Networking (CANDAR 2026). Partially supported by the MUR (Italy) Department of Excellence 2023--2027 and by the National Group for Scientific Computation of Istituto Nazionale di Alta Matematica (GNCS-INdAM).}}
\author[1,2,4]{Gianlorenzo D'Angelo\,\orcidlink{0000-0003-0377-7037}}
\author[1]{Andrea D'Ascenzo\,\orcidlink{0000-0001-5612-0798}}
\author[3,4]{Mattia D'Emidio\,\orcidlink{0000-0001-7833-9520}}
\author[3,4]{Gabriele Di Stefano\,\orcidlink{0000-0003-4521-8356}}
\author[3,4]{Marta Innamorati}
\affil[1]{Gran Sasso Science Institute, L'Aquila, Italy. \texttt{\{gianlorenzo.dangelo,\,andrea.dascenzo\}@gssi.it}}
\affil[2]{Fondazione Gran Sasso Tech, L'Aquila, Italy}
\affil[3]{University of L'Aquila, L'Aquila, Italy. \texttt{\{mattia.demidio,\,gabriele.distefano\}@univaq.it}, \texttt{marta.innamorati@student.univaq.it}}
\affil[4]{Centre of Excellence EX-EMERGE, L'Aquila, Italy}
\date{}

\begin{document}
\maketitle

\begin{abstract}
Computing the top-$k$ \emph{simple} shortest paths between two vertices of a graph, i.e., solving the \spksispshort problem, is a fundamental primitive in graph analytics. In modern scenarios, however, queries are rarely one-shot: parameters change as applications adapt their information needs, and recomputing a solution from scratch at every request quickly becomes a bottleneck. We study \spksispshort in a dynamic setting, focusing on changes of the parameter $k$: given a top-$k$ solution for a pair of vertices and a new value $k'$, the goal is to compute a top-$k'$ solution reusing the previously computed paths. Our approach rests on the \emph{pseudo-tree}, a compact representation of a collection of simple $s$--$t$ paths: we show that building it from an arbitrary, unstructured solution amounts to trie construction, and give two constructions running in time linear in the size of the collection, in expectation and, up to an additive term linear in the number of vertices, in the worst case. On top of it, we design \dynk, a dynamic algorithm that updates a solution upon a change of $k$; we prove its correctness, analyze its complexity, and show that it is a general, solver-agnostic scheme, which starts from the paths alone, is parametric in the routine that generates candidate paths and combinable with known preprocessing techniques, as well as with an online pruning rule that we formalize. While a worst-case update is asymptotically comparable to recomputing from scratch, an extensive experimental evaluation shows that a variant of \dynk\ is the fastest method in the large majority of tested configurations, often by a wide margin (with speedups over recomputation that grow with $k$ and $k'$ and exceed $60\times$), yielding a practical recipe for the configuration to deploy.
\end{abstract}

\section{Introduction}
\label{sec:introduction}
Computing shortest paths is one of the most fundamental operations on graph data, serving applications that range from communication protocols~\cite{Garcia-Luna-Aceves24} and trip planning~\cite{CDD24} to vehicle routing~\cite{archetti2013optimal} and network design~\cite{BascianiDRFIP20,BoklerC20}.
A natural, long-studied generalization is the \emph{$k$ shortest paths} problem, which asks for a collection of \emph{top-$k$} paths in non-decreasing order of weight~\cite{Yen71,Eppstein98,ZoobiCN23}.
It supports all tasks requiring several alternative connections between two vertices (e.g., adaptive motion planning, secure multi-hop communication, fraud detection, and natural language processing~\cite{LiXLHZ023,Luo22,AkibaHNIY15,DAscenzoD24,LiuJYZ18a}), and top-$k$ paths are widely employed as a measure of distance and similarity between vertices in graph learning and network analytics~\cite{Eppstein98,DD23,AkibaHNIY15}.
When the sought paths are required to be \emph{simple} (i.e., loopless), the problem becomes computationally harder~\cite{AgarwalR16}; we focus on its single-pair version, the \spksisp problem (\spksispshort for short).

For \spksispshort on general graphs, the most efficient known algorithm remains Yen's~\cite{Yen71}, running in $\bigO(kn(m+n\log n))$ time on an $n$-vertex, $m$-edge graph; no asymptotic improvement is known~\cite{Gao10,KM16,ZoobiCN23}, better bounds exist only for special graph classes~\cite{PASCOAL_dags,GOTTHILF2009352}, and conditional lower bounds tie the problem to all-pairs shortest paths~\cite{WilliamsW10}.
In practice, however, the \emph{node-classification} algorithm~\cite{Gao10}, the \emph{sidetrack-based} method~\cite{KM16}, the \emph{postponed node classification} (\pnc) algorithm~\cite{ZoobiCN23}, and the recent \peek framework~\cite{PeeK} all match Yen's worst-case complexity yet run substantially faster on practically relevant inputs, with \pnc providing the largest speedups in most cases without increasing space occupancy (the others pay a high memory overhead~\cite{KM16} or rely on parallelism~\cite{PeeK}).
Still, on massive graphs even such carefully engineered methods are expensive to run.

A further source of difficulty is that inputs and parameters are rarely static: real-world graphs evolve over time~\cite{DD23,AkibaHNIY15,cionini_et_al:OASIcs.ATMOS.2014.46}, and the parameters of the queries themselves change as applications adapt their information needs~\cite{CoudertDDI26,AkibaHNIY15}.
Recomputing from scratch at every request is wasteful and, at scale, impractical, which has motivated a large body of work on \emph{dynamic} algorithms that update previously computed structures more efficiently than recomputation~\cite{DAscenzoD24,DD23,CoudertDDI26}. For $k$ shortest paths this has only recently begun to be addressed~\cite{DD23}, and, to the best of our knowledge, the dynamic computation of top-$k$ \emph{simple} shortest paths is still unexplored.

In this paper we take a step in this direction by studying \spksispshort under changes of the parameter $k$: given a previously computed top-$k$ solution for a pair $s,t$ and a new value $k'$, compute a top-$k'$ solution reusing, as much as possible, the paths already at hand.
Increases of the parameter ($k'>k$) arise whenever only a few paths are initially computed, typically to meet a time budget, and more are needed later, e.g., to offer further routing alternatives, as journey planners do when a user asks for more routes~\cite{LiuJYZ18a}, or to improve path-based features in graph learning. A notable instance is the common heuristic for top-$k$ queries with \emph{diversity}, which seek $k$ short yet mutually dissimilar paths (an NP-hard problem, addressed by exact, approximate, and heuristic methods~\cite{Chondrogiannis20,LiuJYZ18a,Luo22}): one computes a top-$k$ solution, filters it by a dissimilarity criterion and, if fewer than the required paths survive, asks for a top-$k'$ solution with $k'\gg k$, repeating until enough dissimilar paths are found~\cite{Chondrogiannis20}.
One might object that a from-scratch algorithm such as \yenalgo could simply be suspended after the $k$-th path and resumed on demand. This is viable only in a narrow setting: the initial solution must have been produced by that very algorithm, and its whole internal state (notably, the priority queue of candidate deviations) must be retained per queried pair, a footprint that largely exceeds the solution's and is prohibitive when many pairs are queried.
Moreover, the available paths may not have been computed by a single-pair algorithm at all. A concrete instance arises in the single-source variant of the problem (\ssksispshort): the recent algorithm of~\cite{DEmidioS26} obtains, as a by-product of its single-source enumeration, $\hat{k}<k$ top simple shortest paths for specific pairs and must complete them to top-$k$ solutions; lacking a dynamic procedure, it recomputes them from scratch, as no single-pair solver state exists to resume. The update must thus start from the previously computed paths alone, regardless of how they were obtained: this is precisely our setting.
Decreases ($k'<k$) arise when fewer paths turn out to suffice and the excess is discarded to save space.
The values of $k$ relevant in these scenarios are often in the hundreds or beyond: diversity queries discard most of the top-$k$ paths as too similar~\cite{Chondrogiannis20}, path-based features in graph learning and network analytics aggregate over large sets of alternatives~\cite{Eppstein98,DD23,AkibaHNIY15}, in \ssksispshort every pair inherits the (possibly large) $k$ of the single-source query~\cite{DEmidioS26}, and robust routing in communication networks explores many near-shortest alternatives~\cite{Garcia-Luna-Aceves24}; accordingly, experimental studies of \spksispshort routinely evaluate $k$ in the thousands, up to $10^4$~\cite{KM16,ZoobiCN23,PeeK}.
\subsection{Our Contribution}
Our approach rests on the \emph{pseudo-tree}, a compact representation of a collection of simple $s$--$t$ paths.
This structure has been used for \spksispshort before~\cite{Gao10,SSTK12}, but always built \emph{incrementally}, from explicit knowledge of how paths deviate from one another; building the pseudo-tree of an \emph{arbitrary}, unstructured collection (such as a whole solution given as a plain set of paths) was not addressed.
We fill this gap by observing that the task amounts to constructing the \emph{trie} of the collection: we adapt the classical trie-insertion procedure~\cite{Fredkin60} to our setting and analyze its correctness and complexity, and we complement it with a deterministic construction achieving the same bound in the worst case, up to an additive $\Theta(n)$ term, and thus asymptotically optimal on collections of total length $\Omega(n)$.
Building on this representation, we design \dynk, an algorithm that updates a top-$k$ solution to \spksispshort into a top-$k'$ one when $k$ changes.
We prove its correctness, analyze its running time, and show that \dynk is a general \emph{scheme}: solver-agnostic, as it starts from the paths alone (and, as we argue, could start from any collection of simple $s$--$t$ paths), parametric in the routine that generates candidate paths, and combinable with speedup techniques such as~\peek~\cite{PeeK}, as well as with \capheu, a simple online pruning rule that we formalize, which caps the search at the weight of the $k$-th lightest candidate found so far.
Although a worst-case update is asymptotically comparable to recomputing from scratch, experiments on real-world and synthetic graphs show that such a worst case is rarely met: in the large majority of tested configurations a variant of \dynk is the fastest way to obtain the updated solution, often by a wide margin, and in the rare exceptions it trails recomputation only marginally; which variant wins is graph-dependent, and our evaluation yields a practical recipe for choosing it.
This is the full version of a paper accepted at CANDAR 2026; it contains all proofs, the pseudo-code of all procedures, worked examples (Section~\ref{sec:pseudo-tree-examples}), and additional experimental material.

\section{Background}
\label{sec:prelim}
In this section, we introduce notation and definitions used throughout the paper.
We are given a directed graph $G=(V,E,\omega)$ with non-negative edge weights, where $E\subseteq V \times V$, while $n=|V(G)|$ and $m=|E(G)|$ denote the number of vertices and edges, resp., and $\omega:E\rightarrow \mathbb{R}_{\geq 0}$ is a weight function.
In the remainder of the paper, we use the terms \emph{graph} and \emph{digraph} interchangeably. Moreover, we simply write $V$ or $E$ when the graph $G$ is clear from the context. For simplicity, we also assume each vertex is associated with a unique identifier in $\{0,1,\dots,n-1\}$ so that each edge is uniquely identified by a pair of integers in $\{0,1,\dots,n-1\}$. We assume the graph contains neither multi-edges nor self-loops.
For a graph $G=(V,E)$ and a vertex $v\in V$, we denote by $\nout_G(v)=\{u \in V \mid (v,u) \in E\}$ and $\nin_G(v)=\{u \in V \mid (u,v) \in E\}$ the sets of outgoing and incoming neighbors of $v$, respectively, in $G$.
Given an integer $q\ge 1$, we use $[q]:=\{1,2,\ldots,q\}$.
A \emph{path} $P=(v_0 \equiv s, \dots, v_{\eta-1} \equiv t)$ in $G$ between its \emph{endpoints} $s,t$ is a sequence of $\eta$ vertices with $(v_i,v_{i+1})\in E$ for all $i$; we write $V(P)$, $E(P)$ for its vertex and edge sets and $|P|=\eta$ for its number of vertices.
A path $P$ is said to be \emph{simple} (or \emph{loopless}) if it does not contain repeated vertices.
Given a collection $\mathcal{P}$ of paths, we denote by $L = \sum_{P \in \mathcal{P}} |P|$ its \emph{total length}; note that the natural representation of $\mathcal{P}$ as a set of vertex sequences has size $\Theta(L)$ and that, for a collection of $k$ simple paths, $L \leq kn$.
The \emph{weight} of a path $P$ is defined as $\wg{P}=\sum_{i=0}^{\eta-2} \omega(v_i,v_{i+1})$.
A \emph{shortest path} between two vertices $s,t\in V$ is a path from $s$ to $t$ with minimum weight among all such paths.
In unweighted graphs, the weight of a path $P$ equals its \emph{length}, i.e., the number $|E(P)|$ of its edges.

Given a graph $G=(V,E,\omega)$ and a subset $V'\subseteq V$, we write $G\setminus V'$ for the subgraph of $G$ obtained by removing the vertices in $V'$ together with all their incident edges, i.e., the graph $(V\setminus V',E \setminus \{(u,v)\in E \mid u \in V' \lor v \in V'\},\omega)$.
Similarly, given a subset $E'\subseteq E$, we write $G\setminus E'$ for the graph $(V,E \setminus E',\omega)$.
Given two simple paths $P'=(v_0, v_1, \dots, v_{\eta-1})$ and $P''=(u_0, u_1, \dots, u_{\zeta-1})$ such that $v_{\eta-1} \equiv u_0$, we denote by $P'\oplus P''$ their \emph{concatenation}, defined as the path $(v_0, v_1, \dots, v_{\eta-1} \equiv u_0, u_1, \dots, u_{\zeta-1})$.
Note that, even when $P'$ and $P''$ are both simple, their concatenation $P'\oplus P''$ need not be simple, since the two paths may share vertices other than the junction $v_{\eta-1}\equiv u_0$.
If $P=P'\oplus Q$, with $Q$ possibly empty, we call $P'$ a \emph{prefix} of $P$. Symmetrically, if $P=Q \oplus P'$, with $Q$ possibly empty, we call $P'$ a \emph{suffix} of $P$.
Given two paths $P',P''$ we say $\Pi$ is a prefix shared by the two whenever $P'=\Pi\oplus S'$ and $P''=\Pi\oplus S''$ for some suffixes $S',S''$.
We say $\Pi$ is the longest prefix shared by two paths whenever it is maximal in terms of number of vertices.
A directed graph $T=(V_T,E_T)$ is an \emph{arborescence} rooted at a distinguished vertex $r\in V_T$, called the \emph{root}, if for every vertex $v\in V_T\setminus\{r\}$ there exists exactly one directed path from $r$ to $v$ in $T$; equivalently, $r$ has in-degree zero, every other vertex has in-degree exactly one, i.e.\ $|\nin_T(r)|=0$ and $|\nin_T(v)|=1$ for each $v\in V_T\setminus\{r\}$, and $T$ is connected when edge orientations are disregarded. As customary, we call \emph{leaves} the vertices of $T$ with out-degree zero, \emph{children} of a vertex $v\in V_T$ the vertices in $\nout_T(v)$, and \emph{depth} of $v$ the number of edges of the unique $r$--$v$ path in $T$.

The \textit{\spksisp} problem (\spksispshort) is defined as follows.
\begin{definition}
Given a graph $G=(V,E,\omega)$, two vertices $s,t \in V$, and an integer $k>0$, compute a collection
$\sol=\{P_1,P_2,\dots,P_{q}\}$ of $q\leq k$ paths from $s$ to $t$ such that:
1) each $P_i$ is simple; 2) $P_i \neq P_j$ for all $i\neq j$; 3) for any \emph{simple} path $P'$ from $s$ to $t$ not in $\sol$ and for any $P_i\in\sol$, it holds $\wg{P'}\geq \wg{P_i}$; 4) the collection is maximal, i.e., there does not exist a collection of size $q'>q$ satisfying the above conditions.
\label{def:spksisp}
\end{definition}
By Definition~\ref{def:spksisp}, if fewer than $k$ simple paths exist from $s$ to $t$, then \spksispshort requires computing all of them.
Any collection satisfying the above conditions is called a collection of \emph{top-$k$ simple shortest paths} for the pair $(s,t)$; notice that multiple feasible collections may exist, for instance when several paths have the same weight.
\blocco{Algorithms for \spksispshort}
The first polynomial-time strategy for \spksispshort was Yen's algorithm (\yenalgo) in 1971~\cite{Yen71}, and over the following decades several practical improvements were introduced. The most relevant for our purposes are the \emph{Node Classification} (\nc) algorithm~\cite{Gao10}, its enhanced variant \emph{Postponed NC} (\pnc)~\cite{ZoobiCN23},  and the recent \peek framework~\cite{PeeK}.

Given a query between two vertices $s$ and $t$, \yenalgo computes the shortest $s$--$t$ path and then repeatedly extracts the best candidate from a priority queue; from each extracted path it generates new candidates by taking each of its vertices, from the deviation point onward, as a \emph{spur node}, forbidding the edges and vertices that would yield duplicate or non-simple paths, computing a shortest path from the spur node to $t$, and concatenating it with the corresponding prefix. It terminates after $k$ paths have been reported. Its running time is $\bigO(kn\cdot\textsc{sp}(n,m))$, where $\textsc{sp}(n,m)$ denotes the cost of a single-pair shortest-path computation: $\bigO(m+n)$ via breadth-first search on unweighted graphs, and $\bigO(m+n\log n)$ via Dijkstra's algorithm on weighted ones.

The \emph{Postponed Node Classification} (\pnc)~\cite{ZoobiCN23} enhances \nc and, like it, uses residual edge weights to generate detours more efficiently than \yenalgo, avoiding many repeated shortest-path computations. It obtains each candidate detour by following a precomputed shortest-path tree towards $t$ after a single deviation from a solution path, and uses interval labels on the tree to decide in constant time whether the detour is simple; when it is not, instead of repairing it immediately with a costly shortest-path computation, \pnc enqueues it with a lower-bound key derived from the residual weights and postpones that computation until, and only if, the candidate reaches the top of the queue. This avoids many shortest-path computations, making \pnc faster than \yenalgo in several practical settings while matching its worst-case complexity. The method of \cite{KM16} has been shown to be comparable to \pnc while using substantially and impractically more memory~\cite{ZoobiCN23}.

The \peek framework~\cite{PeeK}, a recent approach for the $k$ shortest simple paths problem, combines graph reduction, graph compaction, and parallel path enumeration; we adopt only its graph-reduction phase, which is independent of the enumeration algorithm and can thus serve as a preprocessing step for \yenalgo or \pnc. Using an upper bound on the weight of the $k$-th shortest path, together with shortest-path distances from $s$ and to $t$, it removes every vertex and edge that cannot lie on any top-$k$ path, yielding a smaller graph without altering the output; the quality of the bound affects only the amount of reduction, not correctness. Its remaining components, compaction and parallel enumeration, are not used in this work.

\blocco{Pseudo-Tree Representation}
Given a graph $G$ and a collection $\mathcal{P}$ of simple paths connecting two vertices $s,t$ in $G$, the \textit{pseudo-tree}~\cite{DBLP:journals/4or/MartinsP03} of $\mathcal{P}$ is a directed graph that stores the paths in $\mathcal{P}$ in a tree-based structure where prefixes shared by multiple paths are represented only once, defined as follows.
\begin{definition}
\label{def:pseudo-tree}
Let $G=(V,E,\omega)$ be a weighted digraph with $n = |V|$ and let $\mathcal{P} = \{P_1, P_2, \ldots, P_k\}$ be a collection of $k$ pairwise distinct simple $s$--$t$ paths in $G$.
A \emph{pseudo-tree} for $\mathcal{P}$ is an arborescence $T=(V_T,E_T)$ rooted at a vertex $\text{root}(T)\in V_T$ together with a labeling function $\ell: V_T \to V$ such that:
\begin{enumerate}
  \item $\ell(\text{root}(T)) = s$;
  \item for every $P_i = (v_0 = s, v_1, \dots, v_{|P_i|-1} = t) \in \mathcal{P}$ there exists a unique root-to-leaf path
  $(x_0 = \text{root}(T), x_1, \dots, x_{|P_i|-1})$ in $T$ with $\ell(x_j) = v_j$ for all $j \in \{0, \ldots, |P_i|-1\}$;
  \item (prefix uniqueness) for every label sequence $(v_0 = s, \dots, v_j)$ that is a prefix of at least one path in $\mathcal{P}$,
  there exists exactly one vertex $x \in V_T$ at depth $j$ such that the labels along the root-to-$x$ path are $(v_0, \dots, v_j)$.
\end{enumerate}
\end{definition}
Intuitively, $T$ is obtained by overlaying the paths of $\mathcal{P}$ and merging their shared prefixes: every path is read off $T$ along a root-to-leaf path, and every vertex of $T$ corresponds to a distinct prefix of some path. The space complexity of the pseudo-tree is the same as that of $\mathcal{P}$ (specifically, it is exactly equal when paths  share no prefixes). However, in practice, this representation is convenient both for reducing space occupancy and to design fast algorithms that update solutions to \spksispshort, as we show later in the paper.
An example of pseudo-tree is illustrated in Figure~\ref{fig:pseudo-tree-sol}.
\begin{figure}[t]
\centering
\scalebox{.65}{
\subfloat[\label{fig:graph}]{%
\begin{tikzpicture}[
>=Stealth,
v/.style={draw,circle,minimum width=0.4cm},
w/.style={font=\small,inner sep=1pt}
]
\node[v] (s) at (0,5) {s};
\node[v] (b) at (1.5,5) {b};
\node[v] (e) at (1.5,3) {e};
\node[v] (d) at (4.5,5) {d};
\node[v] (g) at (0,0) {g};
\node[v] (f) at (4,3) {f};
\node[v] (t) at (3,0) {t};
\draw[->] (s) -- (b) node[w,midway,above left] {$2$};
\draw[->] (b) -- (e) node[w,midway,left] {$1$};
\draw[->] (s) -- (g) node[midway,left]  {$25$};
\draw[->] (e) -- (g) node[w,midway,left] {$5$};
\draw[->] (e) -- (d) node[w,midway,above] {$8$};
\draw[->] (d) -- (f) node[w,midway,right] {$11$};
\draw[->] (g) -- (f) node[w,midway,above] {$1$};
\draw[->] (g) -- (t) node[w,midway,below] {$6$};
\draw[->] (f) -- (t) node[w,midway,right] {$5$};
\end{tikzpicture}
}
\subfloat[\label{fig:spsol_1}]{%
\begin{tikzpicture}[
    >=Stealth,
    vertex/.style={draw,circle,minimum width=0.4cm},
]
\node[draw=blue,double,circle,minimum width=0.4cm] (s) at (0,5) {s};
\node[vertex] (b) at (0,4) {b};
\node[vertex] (e) at (0,3) {e};
\node[vertex] (g) at (0,2) {g};
\node[vertex] (f) at (0,1) {f};
\node[vertex,double] (t) at (0,0) {t};
\draw[->,red,thick] (s) -- (b) node[midway,left]  {$2$};
\draw[->,red,thick] (b) -- (e) node[midway,left]  {$1$};
\draw[->,red,thick] (e) -- (g) node[midway,left]  {$5$};
\draw[->,red,thick] (g) -- (f) node[midway,left]  {$1$};
\draw[->,red,thick] (f) -- (t) node[midway,right] {$5$};
\end{tikzpicture}
}
\subfloat[\label{fig:spsol_2}]{%
\begin{tikzpicture}[
    >=Stealth,
    vertex/.style={draw,circle,minimum width=0.4cm},
]
\node[draw=blue,double,circle,minimum width=0.4cm] (s) at (0,5) {s};
\node[draw=blue,double,circle,minimum width=0.4cm] (b) at (0,4) {b};
\node[draw=blue,double,circle,minimum width=0.4cm] (e) at (0,3) {e};
\node[draw=blue,double,circle,minimum width=0.4cm] (g) at (0,2) {g};
\node[vertex] (f) at (0,1) {f};
\node[vertex,double] (t) at (0,0) {t};
\node[vertex,double] (tL) at (-1,1) {t};
\draw[->] (s) -- (b) node[midway,left]  {$2$};
\draw[->] (b) -- (e) node[midway,left]  {$1$};
\draw[->] (e) -- (g) node[midway,left]  {$5$};
\draw[->] (g) -- (f) node[midway,left]  {$1$};
\draw[->] (f) -- (t) node[midway,right] {$5$};
\draw[->,red,thick] (g) -- (tL) node[midway,above] {$6$};
 
\end{tikzpicture}
}
\subfloat[\label{fig:spsol_3}]{%
\begin{tikzpicture}[
    >=Stealth,
    vertex/.style={draw,circle,minimum width=0.4cm},
]
\node[draw=blue,double,circle,minimum width=0.4cm] (s) at (0,5) {s};
\node[draw=blue,double,circle,minimum width=0.4cm] (b) at (0,4) {b};
\node[draw=blue,double,circle,minimum width=0.4cm] (e) at (0,3) {e};
\node[vertex] (g) at (0,2) {g};
\node[vertex] (f) at (0,1) {f};
\node[vertex,double] (t) at (0,0) {t};
\node[vertex,double] (tL) at (-1,1) {t};
\node[vertex] (d)  at (1,3) {d};
\node[vertex] (fR) at (1,2) {f};
\node[vertex,double] (tR) at (1,1) {t};
 
\draw[->] (s) -- (b) node[midway,left]  {$2$};
\draw[->] (b) -- (e) node[midway,left]  {$1$};
\draw[->] (e) -- (g) node[midway,left]  {$5$};
\draw[->] (g) -- (f) node[midway,left]  {$1$};
\draw[->] (f) -- (t) node[midway,right] {$5$};
 
\draw[->,thick] (g) -- (tL) node[midway,above] {$6$};
\draw[->,red,thick] (e) -- (d)  node[midway,above] {$8$};
\draw[->,red,thick] (d) -- (fR) node[midway,right] {$11$};
\draw[->,red,thick] (fR) -- (tR) node[midway,right] {$5$};
\end{tikzpicture}
}
}
\caption{\protect\subref{fig:graph}: a graph $G$; \protect\subref{fig:spsol_1}--\protect\subref{fig:spsol_3}: the construction of a pseudo-tree $T$ representing a collection of top-$3$ simple shortest paths from $s$ to $t$ in $G$; the three longest prefixes are highlighted in blue, the corresponding newly added branches in red; all leaves have label equal to $t$.}
\label{fig:pseudo-tree-sol}
\end{figure}
\section{New Algorithms}
\label{sec:algorithms}
In this section we present our algorithmic contributions: we show how to build a pseudo-tree from an unstructured collection of paths by adapting the classical trie-insertion procedure to our setting and by designing a worst-case-efficient alternative. Then, building on this representation, we introduce \dynk, an algorithm that updates a solution to \spksispshort when the parameter $k$ changes: given a top-$k$ collection and a new value $k'$, it returns a top-$k'$ collection, starting from the previously computed paths rather than recomputing the solution from scratch.
\subsection{Pseudo-Tree Construction} 
In the literature~\cite{DBLP:journals/4or/MartinsP03,Gao10,SSTK12}, the pseudo-tree was built \emph{incrementally}, one path at a time: starting from an empty tree, each newly found path is inserted by appending a single branch beneath an already-present prefix. This is computationally light and natural for \spksispshort, since the paths of a solution are produced as \emph{detours}, i.e. any new path coincides with a prefix of a previously found one, up to a deviation vertex $v$, and then branches off. Knowing the deviation vertex, and hence the shared prefix, the insertion reduces to hanging the suffix of the new path under the corresponding node of the tree. This construction, however, is only viable when the deviation structure is known, i.e., when one knows, for each path, of which path it is a detour and at which vertex.
Building the pseudo-tree of an \emph{arbitrary} collection of $s$--$t$ paths (for instance, a whole solution to \spksispshort given as a plain set of paths)
without such information has not been addressed in this line of work. As we observe in Remark~\ref{rem:trie}, this task amounts to constructing a \emph{trie}; accordingly, our first construction (Algorithm~\ref{alg:pseudo-tree}) is not a new algorithm but an adaptation of the classical trie-insertion procedure~\cite{Fredkin60} to our specific setting, in which the strings to be inserted are simple $s$--$t$ paths; our contribution here consists in making this connection explicit and in analyzing correctness and complexity of the resulting procedure, whose $O(kn)$ bound holds in expectation due to hash-based child lookups. We then complement it with Algorithm~\ref{alg:pseudo-tree-wc}, a construction that attains the $O(kn)$ bound deterministically, i.e., in the \emph{worst case}, and hence is asymptotically optimal with respect to running time, as the total length of the input collection can be $\Theta(kn)$. Such a from-scratch construction is precisely what our dynamic algorithms require, as they build and manipulate pseudo-trees of solutions that are given as a whole.
\begin{remark}\label{rem:trie}
Viewing each path as the string of its vertex ids, the pseudo-tree of
$\mathcal{P}$ is precisely the \emph{trie}~\cite{Fredkin60} of
the corresponding set of strings over the alphabet $V$. Two features distinguish these strings from those handled by general tries. First, since the paths are simple, each string consists of pairwise distinct symbols, whereas arbitrary strings may repeat them; in particular, string lengths are bounded by $n$. Second, the
structure of simple $s$--$t$ paths makes the collection \emph{prefix-free} (no such path can be a proper prefix of another, as this would require
revisiting $t$), so the trie needs no end-of-string markers and its leaves
are exactly the $k$ vertices labeled $t$.
\end{remark}
Informally, a pseudo-tree can be constructed by creating, for each path $P_i \in \mathcal{P}$, a directed path in the arborescence with labels corresponding to the vertices of $P_i$, while taking care not to add more than once any shared prefix.
More formally, given a graph $G$ and a set of $s$-$t$ paths $\mathcal{P}$, a pseudo-tree can be constructed as shown in Algorithm~\ref{alg:pseudo-tree}, which instantiates the textbook trie-insertion procedure on vertex-labeled paths: the children of each pseudo-tree vertex are stored in a hash map keyed by their labels, so that the child lookup at line~\ref{line:lookup} takes $\bigO(1)$ expected time. Each path is thus inserted by walking down $T$ while its prefix is already represented and branching off, with new vertices, where it is not.
\begin{algorithm}[t]\small
\KwIn{Weighted graph $G = (V, E, \omega)$ with $n = |V|$; list of $k$ simple $s$--$t$ paths $\mathcal{P} = \{P_1, P_2, \ldots, P_k\}$ on $G$}
\KwOut{Pseudo-tree $T = (V_T, E_T)$ with labeling $\ell: V_T \to V$, rooted at vertex $r~:~\ell(r) = s$}
 
$V_T \gets \{r\}$; $E_T \gets \emptyset$; $\ell(r) \gets s$\;\label{line:init:one}
$\text{root}(T) \gets r$\label{line:init:two}\;
 
\ForEach(\label{line:outer}){path $P_i = (v_0^i = s, v_1^i, \ldots, v_{|P_i|-1}^i = t) \in \mathcal{P}$}{
    $u \gets \text{root}(T)$\;
    \For(\label{line:inner}){$j \gets 1$ \KwTo $|P_i| - 1$}{
        $v \gets v_j^i$\;
        \lIf(\label{line:lookup}){$\exists$ child $x$ of $u$ in $T$ : $\ell(x) = v$}{
            $u \gets x$}
        \Else{
            Create new vertex $x$ with $\ell(x) \gets v$\;\label{line:insert}
            $V_T \gets V_T \cup \{x\}$; $E_T \gets E_T \cup \{(u, x)\}$\;
            $u \gets x$\;
        }
    }
}
\Return{$T = (V_T, E_T)$ with labeling $\ell$}\;
\caption{\build: Pseudo-Tree Construction}\label{alg:pseudo-tree}
\end{algorithm}
\begin{algorithm}[t]
\small
\KwIn{Weighted graph $G = (V, E, \omega)$ with $n = |V|$; list of $k$ simple $s$--$t$ paths $\mathcal{P} = \{P_1, P_2, \ldots, P_k\}$ on $G$, where $P_i = (v_0^i = s, v_1^i, \ldots, v_{|P_i|-1}^i = t)$}
\KwOut{Pseudo-tree $T = (V_T, E_T)$ with labeling $\ell: V_T \to V$, rooted at vertex $r~:~\ell(r) = s$}
$\textit{seen}[0..n{-}1] \gets \textbf{false}$\;
$\textit{node}[0..n{-}1] \gets \textbf{null}$\;
$\textit{groups}[0..n{-}1] \gets \text{empty stacks}$\;
$V_T \gets \{r\}$; $E_T \gets \emptyset$; $\ell(r) \gets s$\;
$\text{root}(T) \gets r$\;
\lForEach{$i \in [k]$}{
    $\textit{groups}[s].\text{push}((i, 1, r))$}
$\addrecur(r, s)$\;
\Return{$T = (V_T, E_T)$ with labeling $\ell$}\;
\caption{\buildwc: Worst-Case Pseudo-Tree Construction}\label{alg:pseudo-tree-wc}
\end{algorithm}
{\renewcommand{\algorithmcfname}{Procedure}%
\begin{algorithm}[t]
\small
\caption{\addrecur($\bar{u}$, $u$): Add children of pseudo-tree vertex $\bar{u}$ for active path segments in $\textit{groups}[u]$}\label{proc:add}
\KwIn{Pseudo-tree vertex $\bar{u} \in V_T$; graph vertex $u \in V$ : $\ell(\bar{u}) = u$}
\lIf{$\textit{groups}[u] = \emptyset$}{\Return}
$\textit{touched} \gets$ empty stack\;
\While(\label{line:phase:1}){$(i, d, w) \gets \textit{groups}[u].\text{top}()$}{
    \lIf{$w \neq \bar{u}$}{\textbf{break}\label{line:triple:break}}
    $(i, d, w) \gets \textit{groups}[u].\text{pop}()$\;
    $v \gets v_d^i$\;
    \If{\textbf{not} $\textit{seen}[v]$}{
        $\textit{seen}[v] \gets \textbf{true}$\;
        Create new pseudo-tree vertex $x$ with $\ell(x) \gets v$\;
        $V_T \gets V_T \cup \{x\}$; $E_T \gets E_T \cup \{(\bar{u}, x)\}$\;
        $\textit{node}[v] \gets x$\;
        $\textit{touched}.\text{push}((v,x))$\;
    }
    \If{$d < |P_i| - 1$}{
        $x \gets \textit{node}[v]$\;
        $\textit{groups}[v].\text{push}((i, d+1, x))$\;\label{line:phase:push}
    }
}
\ForEach(\label{line:phase:2}){$(v, x) \in \textit{touched}$}{
    $\textit{seen}[v] \gets \textbf{false}$;
    $\textit{node}[v] \gets \textbf{null}$\;
}
\While(\label{line:phase:3}){$\textit{touched} \neq \emptyset$}{
    $(v, x) \gets \textit{touched}.\text{pop}()$\;
    $\addrecur(x, v)$\;
}
\end{algorithm}}
For simplicity, we say that we add a vertex $v$ of $G$ to $T$ to indicate that we add a vertex $x$ to $V_T$ with label $\ell(x) = v$. Similarly, we say that there exists a path $P = (v_0, v_1, \dots)$ in a pseudo-tree whenever the pseudo-tree admits a root-to-leaf path $(x_0, x_1, \dots)$ whose labels are $(\ell(x_0), \ell(x_1), \dots) = (v_0, v_1, \dots)$.
We prove the following results.
\begin{theorem}[Correctness]\label{thm:pseudo-tree-correctness}
Let $G = (V, E, \omega)$ be a weighted graph and let $\mathcal{P} = \{P_1, \ldots, P_k\}$ be a collection of pairwise distinct simple $s$--$t$ paths on $G$. Algorithm~\ref{alg:pseudo-tree} correctly constructs a pseudo-tree $T$ for $\mathcal{P}$ satisfying Definition~\ref{def:pseudo-tree}.
\end{theorem}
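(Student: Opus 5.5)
The plan is to prove, by induction on the number $i$ of processed paths, an invariant for the tree $T_i$ built after the first $i$ iterations of the outer loop at line~\ref{line:outer}. Write $\mathcal{P}_i=\{P_1,\dots,P_i\}$. The invariant has three parts:
\begin{itemize}
\item[(a)] $T_i$ is an arborescence rooted at $r$ with $\ell(r)=s$;
\item[(b)] the map sending each $x\in V_T$ to the label sequence along the root-to-$x$ path is a bijection onto the set of nonempty prefixes of paths in $\mathcal{P}_i$;
\item[(c)] distinct children of any vertex carry distinct labels.
\end{itemize}
The base case $i=0$ is the single root produced by lines~\ref{line:init:one}--\ref{line:init:two}, with the convention that the only prefix is $(s)$, which is shared by all paths. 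Property~1 of Definition~\ref{def:pseudo-tree} is then immediate, since $r$ is created with label $s$ and is never relabeled.

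For the inductive step, I will analyze the inner loop at line~\ref{line:inner} through a secondary invariant. After iteration $j$, the current vertex $u$ is the unique vertex whose root path is labeled $(v_0^i,\dots,v_j^i)$. At each iteration there are two cases.
\begin{itemize}
\item If a child $x$ with $\ell(x)=v_j^i$ exists (line~\ref{line:lookup}), then by (c) it is unique, and its root path spells the extended prefix.
\item Otherwise, by (b) and (c) the prefix $(v_0^i,\dots,v_j^i)$ is not yet represented. A fresh vertex is created at line~\ref{line:insert}, with exactly one incoming edge from $u$. This preserves (a), since the new vertex has in-degree one and stays connected. It preserves (c), since the new label did not occur among the children of $u$. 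It preserves (b), since the new vertex represents exactly the new prefix.
\end{itemize}
Once a new vertex is created, all later vertices along $P_i$ are also new, but the argument does not need this. Vertices are only added, never removed, and existing labels never change, so the representatives of old prefixes are untouched.

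From the invariant at $i=k$ I will derive properties~2 and~3. Property~3 (prefix uniqueness) is exactly (b), noting that depth equals the prefix length minus one. For property~2, the vertex reached at the end of $P_i$'s insertion spells $P_i$. It must also be a leaf, and this is the one delicate point. By Remark~\ref{rem:trie} the collection is prefix-free, so no path in $\mathcal{P}$ extends the full sequence $P_i$; by (b) that vertex therefore has no children. Uniqueness of the root-to-leaf path spelling $P_i$ follows from (b), because the labels determine each vertex along the path.

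I expect the only real obstacle to be stating invariant (b) carefully enough, as a bijection between vertices and prefixes, so that leafness and uniqueness follow cleanly. Once that is set up, the rest is routine bookkeeping.
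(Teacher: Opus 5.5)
Your proof is correct and follows essentially the same route as the paper. Both arguments use an invariant that $T$ stays an arborescence with pairwise distinct sibling labels, an inner-loop invariant on the label sequence spelled by the current vertex $u$, and prefix-freeness (Remark~\ref{rem:trie}) to conclude that each path's endpoint is a leaf. The only difference is cosmetic: you carry prefix uniqueness through the induction as the bijection (b), which needs your convention that $(s)$ is counted at $i=0$, while the paper derives it at the end by a minimal-counterexample argument from distinct sibling labels.
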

\begin{proof}
We first show that, throughout the execution, $T$ is an arborescence rooted at $r$ and that, for every vertex $u \in V_T$, no two children of $u$ carry the same label. Both properties hold for the initial tree, which consists of the root only. They are preserved by every insertion (line~\ref{line:insert}): a new vertex $x$ is added together with the single edge $(u,x)$, where $u$ is an existing vertex of $T$, so $x$ has in-degree one and is reachable from $r$ through a unique directed path (the one reaching $u$, extended by $(u,x)$); moreover, $x$ is created only when the lookup at line~\ref{line:lookup} fails, i.e., when $u$ has no child labeled $\ell(x)$, so the labels of the children of $u$ remain pairwise distinct.

Next, we prove by induction on the iterations of the inner loop (line~\ref{line:inner}) the following invariant: at the beginning of the iteration with index $j$ of the inner loop for path $P_i = (v_0^i = s, v_1^i, \ldots, v_{|P_i|-1}^i = t)$, the current vertex $u$ is a vertex of $T$ at depth $j-1$ whose root-to-$u$ label sequence is $(v_0^i, \ldots, v_{j-1}^i)$. The invariant holds for $j=1$, since $u = \text{root}(T)$ has depth $0$ and $\ell(\text{root}(T)) = s = v_0^i$. Assume it holds at the beginning of iteration $j$. During the iteration, the algorithm either moves to the (unique, by the above property) child $x$ of $u$ with $\ell(x) = v_j^i$, if it exists, or creates such a child and moves to it. In both cases, at the end of the iteration $u$ is a vertex at depth $j$ whose root-to-$u$ label sequence is $(v_0^i, \ldots, v_j^i)$, which establishes the invariant for $j+1$.

By the invariant, after the inner loop for $P_i$ has completed, $T$ contains a root-to-$x$ path whose labels are exactly $(v_0^i, \ldots, v_{|P_i|-1}^i)$, i.e., a path representing $P_i$; since, by Remark~\ref{rem:trie}, no path in $\mathcal{P}$ is a proper prefix of another, the endpoint of such a path, labeled $t$, is a leaf of the final tree. Hence condition~2 of Definition~\ref{def:pseudo-tree} is satisfied, with uniqueness of the representing path following from the pairwise distinctness of sibling labels; condition~1 holds by initialization (lines~\ref{line:init:one}--\ref{line:init:two}).

Finally, we prove condition~3 (prefix uniqueness). Existence: every label sequence $(v_0 = s, \ldots, v_j)$ that is a prefix of some $P_i \in \mathcal{P}$ is realized by a vertex at depth $j$, namely the vertex reached by the algorithm at the end of iteration $j$ of the inner loop for $P_i$. Uniqueness: suppose, towards a contradiction, that two distinct vertices $x \neq y$ at depth $j$ realize the same label sequence, and choose a pair minimizing $j$. Then $j \geq 1$ and, by minimality, the parents of $x$ and $y$ realize the same label sequence of length $j$, hence coincide; but then $x$ and $y$ are two distinct children of the same vertex carrying the same label $v_j$, contradicting the pairwise distinctness of sibling labels. Therefore, the output of Algorithm~\ref{alg:pseudo-tree} satisfies all conditions of Definition~\ref{def:pseudo-tree}.
\end{proof}
\begin{theorem}[Time Complexity]\label{thm:pseudo-tree-time}
Under the same hypotheses, assuming the children of each $u \in V_T$ are stored in a hash map keyed by their labels, Algorithm~\ref{alg:pseudo-tree} constructs $T = (V_T, E_T)$ with labeling $\ell: V_T \to V$ in $O(L)$ expected time.
\end{theorem}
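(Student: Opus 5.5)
The argument is a direct operation count for Algorithm~\ref{alg:pseudo-tree}: charge constant expected work to each vertex occurrence of each path, then sum over the paths. The outer loop at line~\ref{line:outer} runs once per path $P_i$. The inner loop at line~\ref{line:inner} runs exactly $|P_i|-1$ times, so the inner loop executes $\sum_i (|P_i|-1) \le L$ iterations in total. On top of this there is $O(1)$ work per path for resetting $u$ to the root, and $O(1)$ work for the initialization at lines~\ref{line:init:one}--\ref{line:init:two}. Since every path has at least one vertex, $k\le L$, so these extra costs are absorbed and the total is $O(L)$ times the cost of a single inner iteration.

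Next I will show that one inner iteration costs $O(1)$ expected time. The lookup at line~\ref{line:lookup} is a single query, keyed by the label $v$, on the hash map of the children of $u$. The alternative branch does three things:
\begin{enumerate}
\item creates a vertex $x$ and sets $\ell(x)$;
\item allocates an empty hash map for the children of $x$;
\item adds the edge $(u,x)$ by inserting the key $v\mapsto x$ into the map of $u$.
\end{enumerate}
Allocating an empty map costs $O(1)$, and one map insertion costs $O(1)$ expected time under the standard hashing assumption. Moving $u$ to $x$ is $O(1)$. By linearity of expectation, the expected total time is at most $O(1)$ per iteration summed over at most $L$ iterations, giving $O(L)$.

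I would also note that the output size is within this bound. By the proof of Theorem~\ref{thm:pseudo-tree-correctness}, at most one vertex is created per inner iteration, so $|V_T|\le L$ and $|E_T|=|V_T|-1$. The output therefore fits in the time bound, and so does the total space of all hash maps, which is linear in the number of entries.

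The only delicate point is justifying $O(1)$ expected time per hash operation regardless of how the input paths are chosen. The expectation must be over the random choice of hash function, using universal hashing, with table resizing handled by amortization. Keys are vertex ids in $\{0,\dots,n-1\}$ and each map holds at most $\min(n,\text{number of children})$ keys. I will therefore state the assumption explicitly: a dynamic hash table supporting lookup and insert in $O(1)$ expected amortized time. I will then observe that amortized and expected bounds compose additively over the whole sequence of at most $2L$ operations.
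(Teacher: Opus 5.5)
Your proposal is correct and follows the same route as the paper's proof: you bound the total number of inner-loop iterations by $\sum_i(|P_i|-1) < L$ and charge $O(1)$ expected time to each one via the hash-map child lookup. You are slightly more careful than the paper in two places, absorbing the $O(k)$ per-path overhead via $k \le L$ and counting the insertion of a new child into its parent's hash map as $O(1)$ expected amortized (the paper calls it worst-case), but these refine the same argument rather than change it.
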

\begin{proof}
Since each path in $\mathcal{P}$ is simple, it contains at most $n$ vertices, i.e., $|P_i| \leq n$ for all $i \in [k]$.
The initialization (lines \ref{line:init:one}--\ref{line:init:two}), which creates the tree with a single root vertex $r$ and sets $\ell(r) = s$, takes $O(1)$ time. The outer loop (line \ref{line:outer}) iterates over all $k$ paths, and for each path $P_i$ the inner loop (line \ref{line:inner}) iterates over at most $n - 1$ vertices (excluding the source $s$). Each inner iteration performs a child lookup (line \ref{line:lookup}), i.e., checking whether the current vertex $u$ has a child $x$ with $\ell(x) = v$, which takes $O(1)$ expected time using a hash map that associates each vertex $u \in V_T$ with its children indexed by label, and, when the prefix diverges, an insertion (line \ref{line:insert}), i.e., creating a new vertex $x$, setting $\ell(x) = v$ and adding the edge $(u, x)$ to $E_T$, which takes $O(1)$ worst-case time.
The total number of inner loop iterations, summed over all paths, is $\sum_{i=1}^{k}(|P_i|-1) < L$. Since each iteration has $O(1)$ expected cost (dominated by the child lookup), the total running time is $O(L)$ expected; as $L \leq kn$, this is in particular $O(kn)$.
\end{proof}
\begin{theorem}[Space Complexity]\label{thm:pseudo-tree-space}
Under the same hypotheses, Algorithm~\ref{alg:pseudo-tree} requires $O(L)$ space; in particular, $|V_T| \leq L$.
\end{theorem}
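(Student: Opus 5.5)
The plan is to charge every unit of memory the algorithm uses to a vertex of $T$, and then count the vertices of $T$ by counting vertex-creation events in Algorithm~\ref{alg:pseudo-tree}. There are three kinds of storage. The input collection $\mathcal{P}$ occupies $\Theta(L)$ by the convention fixed in Section~\ref{sec:prelim}. The output consists of $V_T$, $E_T$, the labeling $\ell$ and the per-vertex hash maps of children. Apart from these, the algorithm keeps only a constant number of working variables ($u$, $v$, $x$ and the loop indices).

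First, I bound $|V_T|$. The root is created at line~\ref{line:init:one}. Every other vertex is created at line~\ref{line:insert}, and that line runs at most once per iteration of the inner loop (line~\ref{line:inner}). As computed in the proof of Theorem~\ref{thm:pseudo-tree-time}, the inner loop runs $\sum_{i=1}^{k}(|P_i|-1) = L-k$ times in total. Hence
\[
|V_T| \le 1 + L - k \le L,
\]
using $k \ge 1$; if $k=0$, then $T$ is just the root, a trivial case I will treat separately. Since $T$ is an arborescence (shown in the proof of Theorem~\ref{thm:pseudo-tree-correctness}), we also get $|E_T| = |V_T| - 1 < L$. The labeling $\ell$ stores one vertex id per node, so it uses $O(|V_T|)$ space.

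The only step that needs care is the space taken by the hash maps of children. Each child $x$ of $u$ appears as exactly one entry in the map of $u$, namely the one keyed by $\ell(x)$. So the total number of entries over all maps equals $|E_T|$. I will assume a standard hash-table implementation whose space is linear in its number of entries plus $O(1)$, for example by resizing with a constant load factor, or by allocating a map lazily only when a vertex gets its first child. Under that assumption, all maps together take $O(|V_T| + |E_T|) = O(L)$ space.

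Adding up the input, the output structures and the $O(1)$ working space gives $O(L)$ space in total, and the explicit bound $|V_T| \le L$ follows from the count above.
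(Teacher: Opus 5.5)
Your proof is correct and follows essentially the same route as the paper: count the root plus at most one new vertex per inner-loop iteration to get $|V_T|\le 1+\sum_i(|P_i|-1)\le L$, then bound $E_T$, $\ell$, and the hash-map entries through $|E_T|=|V_T|-1$. Your accounting of the per-map $O(1)$ overhead, giving $O(|V_T|+|E_T|)$ for all maps together, is a harmless refinement the paper leaves implicit. One small correction: for $k=0$ you get $|V_T|=1>0=L$, so the inequality genuinely fails there. That case cannot be ``treated separately''; it is simply excluded by the hypothesis $k\ge 1$.
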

\begin{proof}
The space usage is dominated by the pseudo-tree structure $T = (V_T, E_T)$: in the worst case, no path shares a non-empty prefix with any other. The total number of vertices in $V_T$ is at most $|V_T| = 1 + \sum_{i=1}^{k} (|P_i| - 1) \leq L$, where the term $1$ corresponds to the root $r$, counted only once.
Moreover, since $T$ is a tree, the number of edges is $|E_T| = |V_T| - 1 = O(L)$.
The labeling function $\ell$ requires $O(|V_T|) = O(L)$ space.
Finally, for each vertex $u \in V_T$, we maintain a hash map of its children keyed by label. The total number of entries across all hash maps equals the number of edges, i.e., $O(L)$.
Summing all contributions, the total space required is $O(L)$; since each path is simple, $|P_i| \leq n$ for all $i$, hence $L \leq kn$.
\end{proof}
Note that $\mathcal{P}$ is recovered from $T$ by any traversal from the root, each root-to-leaf label sequence inducing one path of $\mathcal{P}$.
In the rest of the paper, we  write $T$ to indicate the pseudo-tree $T(\mathcal{P})$ of a collection $\mathcal{P}$ when clear from the context.
\blocco{Worst-Case Pseudo-Tree Construction}
Algorithm~\ref{alg:pseudo-tree} achieves running time linear in the input size, but only in an expected sense, due to hash map operations. We now present an algorithm, Algorithm~\ref{alg:pseudo-tree-wc}, that achieves $O(n + L)$ \emph{worst-case} time by replacing hash-based lookups with direct array indexing: its running time is linear in the input size up to an additive $\Theta(n)$ term, needed to initialize its direct-address arrays, and is therefore asymptotically optimal whenever $L = \Omega(n)$. As we show in Section~\ref{sec:experiments}, however, the stronger theoretical guarantee does not translate into practical gains: on realistic instances the input paths are short relative to the graph size, i.e., $L \ll n$, the additive $\Theta(n)$ term dominates, and the simpler trie-based construction is consistently faster.
The key idea is to process paths \emph{level-by-level} rather than path-by-path, using a depth-first traversal of the pseudo-tree under construction. We maintain \emph{active path segments}, where each segment is represented as a triple $(i, d, w)$ indicating that path $P_i$ is currently being processed at position $d$, and the segment belongs to the subtree rooted at pseudo-tree vertex $w$. The tagging with $w$ is essential: since the same graph vertex can appear at different depths across different paths, we must distinguish which segments belong to the current recursive call.
 
The algorithm uses three global arrays indexed by graph vertices in $\{0, \ldots, n-1\}$: a boolean array $\textit{seen}$, tracking which vertices were encountered at the current depth; an array $\textit{node}$, mapping each such vertex to its current pseudo-tree vertex (for $O(1)$ retrieval when tagging continuing segments); and an array $\textit{groups}$ of stacks holding the segments of each vertex. The LIFO order of $\textit{groups}$ is crucial: as we recurse depth-first, segments of deeper calls, pushed later, must be processed before those of ancestor calls.
 
The recursive procedure $\addrecur(\bar{u}, u)$ (Procedure~\ref{proc:add}), with $u = \ell(\bar{u})$, processes the segments of $\textit{groups}[u]$ belonging to $\bar{u}$. In Phase~1 (line~\ref{line:phase:1}) it pops the top triple $(i, d, w)$ while $w = \bar{u}$ (stopping otherwise, line~\ref{line:triple:break}, as the triple then belongs to an ancestor call) and examines $v = v_d^i$: if $v$ is unseen at this depth, it creates a child $x$ with $\ell(x)=v$, sets $\textit{node}[v]=x$, and records $(v,x)$ in the local stack $\textit{touched}$; if $P_i$ continues, it pushes $(i, d+1, x)$ onto $\textit{groups}[v]$ (line~\ref{line:phase:push}), tagged with $x$.
After processing all relevant segments, Phase~2 (line~\ref{line:phase:2}) resets $\textit{seen}[v]$ and $\textit{node}[v]$ for each touched vertex $v$. This reset occurs \emph{before} any recursive call, ensuring that each recursive invocation sees clean global arrays. Finally, Phase~3 (line~\ref{line:phase:3}) recurses on each child: for each pair $(v, x)$ in $\textit{touched}$, it calls $\addrecur(x, v)$. The recursive call processes only the triples in $\textit{groups}[v]$ that are tagged with $x$, stopping when it encounters a triple tagged with a different vertex.
\begin{remark}\label{rem:buildwc-scope}
The bookkeeping of \buildwc is deliberately tailored to our setting, trading generality for efficiency. Its stack-plus-tag discipline is sound as long as no input string contains two equal \emph{adjacent} symbols: were $v_d^i = v_{d-1}^i$ for some string, Phase~1 would push continuations onto the very stack it is draining, and the tag check would prematurely abandon the segments left beneath them. Simple paths satisfy this requirement a fortiori, as they repeat no vertex at all (in fact, the algorithm applies verbatim to any prefix-free collection of pairwise distinct walks in a graph with no self-loops), whereas fully general strings would require materializing the partition of the active segments before recursing, at the cost of allocating fresh containers at every call. By pushing continuations directly into the preallocated global stacks, \buildwc avoids such allocations altogether; we expect this restriction to translate into smaller constant factors in the construction time, which, while leaving the asymptotic bound untouched, do matter in practice.
\end{remark}
We can prove the following results.
\begin{theorem}[Correctness]
Let $G = (V, E, \omega)$ be a weighted graph and let $\mathcal{P} = \{P_1, \ldots, P_k\}$ be a collection of pairwise distinct simple $s$--$t$ paths on $G$. Algorithm~\ref{alg:pseudo-tree-wc} correctly constructs a pseudo-tree $T$ for $\mathcal{P}$ satisfying Definition~\ref{def:pseudo-tree}.
\end{theorem}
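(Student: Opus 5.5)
The plan is to prove, by induction on the recursion, a single invariant describing the stacks at the moment \addrecur is entered. Correctness then follows exactly as in the proof of Theorem~\ref{thm:pseudo-tree-correctness}.

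Call a pseudo-tree vertex $\bar u$ at depth $d-1$, with root-to-$\bar u$ label sequence $\Pi$, \emph{pending} for path $P_i$ if $\Pi=(v_0^i,\dots,v_{d-1}^i)$ and $d\le|P_i|-1$. The invariant to establish is the following. When $\addrecur(\bar u,u)$ is called, the top of $\textit{groups}[u]$ consists of exactly the triples $(i,d,\bar u)$, one for each path $P_i$ having $\Pi$ as a prefix. These triples form a contiguous block. Every triple beneath the block carries a tag $w\neq\bar u$ and belongs to a pending ancestor call. Moreover, $\textit{seen}$ and $\textit{node}$ are all false/null.

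The base case is the initial call $\addrecur(r,s)$, which holds by initialization: all $k$ triples are $(i,1,r)$. The stack of $s$ holds nothing else, since $s$ occurs only at position $0$ of simple $s$--$t$ paths.

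For the inductive step I would argue through the three phases in turn.

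\emph{Phase 1.} The phase pops exactly the block, and stops at the first foreign tag. For each distinct label $v=v_d^i$ among the block it creates exactly one child $x$ of $\bar u$, using the reset arrays together with $\textit{seen}$. It pushes $(i,d+1,x)$ onto $\textit{groups}[v]$ precisely for the paths whose prefix extends $\Pi\oplus(v)$ and that continue beyond position $d$. Here the simplicity of the paths is used: $v\neq u$, so the continuation pushes never land on the stack currently being drained (Remark~\ref{rem:buildwc-scope}).

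\emph{Phase 2.} This phase restores the clean-array condition before any recursive call.

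\emph{Phase 3.} When child $(v,x)$ is recursed on, the top block of $\textit{groups}[v]$ must be exactly the triples tagged $x$. This is the main obstacle, for two reasons. First, triples pushed onto $\textit{groups}[v]$ during Phase~1 lie above any older triples on that stack, and those older triples come from ancestor calls still in progress. Second, the earlier sibling recursions in Phase~3 may push onto $\textit{groups}[v]$ as well, so it must be shown that they leave it as they found it.

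To handle this I would prove a second, stack-discipline claim alongside the invariant: each call $\addrecur(\bar u,u)$ returns with every stack $\textit{groups}[\cdot]$ restored to its content at entry, minus the block it consumed. This follows by induction, because every triple pushed during a call is tagged with a descendant of $\bar u$. Each such triple is consumed by the recursive call on that descendant, which is made before the parent call returns. Since every pushed triple is consumed before its pusher returns, LIFO order guarantees that siblings do not interfere. Combined with the fact that triples tagged $x$ are pushed only in the Phase~1 step that creates $x$, this gives the invariant for each child call.

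From the invariant, the construction is a tree in which every created vertex has a single parent edge, and sibling labels are distinct by the $\textit{seen}$ test. Every prefix of every $P_i$ is realized, and its final vertex labeled $t$ is a leaf, by prefix-freeness (Remark~\ref{rem:trie}). Hence conditions~1--3 of Definition~\ref{def:pseudo-tree} follow verbatim as in the proof of Theorem~\ref{thm:pseudo-tree-correctness}, including the minimal-depth uniqueness argument for condition~3.
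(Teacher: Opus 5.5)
Your proposal is correct and is essentially the paper's argument. Your entry invariant bundles the paper's three invariants: each triple is consumed exactly once, by the call matching its tag; the tagged block is contiguous at the top of its stack, by LIFO order and depth-first recursion; and the $\textit{seen}$/$\textit{node}$ arrays are clean at every call. Your stack-restoration claim is a more careful version of the paper's justification that a child's triples are consumed before sibling branches are explored, after which the pseudo-tree conditions follow as in the correctness proof of the trie-based construction.

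Two small touch-ups. First, state the block as one triple per path for which $\bar u$ is pending; as literally written, a $t$-labeled leaf would be required to hold a triple, whereas $\textit{groups}[t]$ is always empty. Second, note that each call also returns with $\textit{seen}$ and $\textit{node}$ clean, since this, and not Phase~2 alone, is what gives clean arrays to the second and later siblings in Phase~3.
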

\begin{proof}
We prove correctness by establishing the following invariants:
\textbf{Invariant 1:} Each triple $(i, d, w)$ is processed exactly once, in the call $\addrecur(w, \ell(w))$.
This holds because: (a) each triple is tagged with the pseudo-tree vertex $w$ to which it belongs; (b) the check $w \neq \bar{u}$ ensures that a triple is only processed when the current call matches its tag (if the tag does not match, the triple is not popped and the loop terminates); (c) once processed, the triple is popped and never inspected again by the same call.
 
\textbf{Invariant 2:} When $\addrecur(\bar{u}, u)$ is called, all triples in $\textit{groups}[u]$ tagged with $\bar{u}$ are consecutive at the top of the stack.
This holds because: (a) the LIFO discipline ensures that triples pushed later are at the top; (b) recursive calls process depth-first, so a child's triples are pushed onto $\textit{groups}[v]$ and processed completely before sibling branches are explored; (c) triples tagged with ancestor vertices remain deeper in the stack, untouched until all descendant calls return.
 
\textbf{Invariant 3:} The $\textit{seen}$ and $\textit{node}$ arrays are clean (respectively, all $\textbf{false}$ and all $\textbf{null}$ for the relevant entries) at the start of each call to \addrecur.
This holds because Phase~2 resets $\textit{seen}[v]$ and $\textit{node}[v]$ for all touched vertices before any recursive call in Phase~3.
 
\textbf{Construction correctness:} For each path $P_i = (v_0^i, v_1^i, \ldots, v_{|P_i|-1}^i)$, the algorithm creates a sequence of pseudo-tree vertices $x_0 = r, x_1, \ldots, x_{|P_i|-1}$ with $\ell(x_j) = v_j^i$ and edges $(x_{j-1}, x_j)$ for $j \in [|P_i|-1]$. The prefix uniqueness property is guaranteed by the $\textit{seen}$ check: if multiple paths share a prefix, they share the same pseudo-tree vertices for that prefix.
Therefore, the output satisfies all three conditions of Definition~\ref{def:pseudo-tree}.
\end{proof}
\begin{theorem}[Time Complexity]\label{thm:pseudo-tree-wc-time}
Let $G = (V, E, \omega)$ be a weighted graph with $n = |V|$ vertices, and let $\mathcal{P}$ be a list of $k$ simple $s$--$t$ paths on $G$. Algorithm~\ref{alg:pseudo-tree-wc} constructs the pseudo-tree $T = (V_T, E_T)$ with labeling $\ell: V_T \to V$ in $O(n + L)$ worst-case time.
\end{theorem}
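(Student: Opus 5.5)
The plan is an amortized accounting argument that charges every unit of work either to the $\Theta(n)$ initialization or to a single triple $(i,d,w)$. First I will bound the initialization of Algorithm~\ref{alg:pseudo-tree-wc}. Allocating and clearing the arrays $\textit{seen}$, $\textit{node}$ and $\textit{groups}$ takes $O(n)$ time. Pushing the $k$ initial triples onto $\textit{groups}[s]$ takes $O(k)$ time, and $k \le L$ because every path has at least one vertex. So initialization costs $O(n+L)$. Next I will count the triples ever created: each triple $(i,d,w)$ corresponds to a pair (path $P_i$, position $d$) with $1 \le d \le |P_i|-1$, and a new triple $(i,d+1,\cdot)$ is pushed only after $(i,d,\cdot)$ is popped. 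Hence at most $\sum_i (|P_i|-1) < L$ triples are pushed overall, and by Invariant~1 of the correctness proof each is popped exactly once.

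The core of the proof is bounding the cost of a single call $\addrecur(\bar u,u)$, excluding its recursive subcalls, by $O(1 + p_{\bar u})$. Here $p_{\bar u}$ is the number of triples popped in that call. In Phase~1, each loop iteration either pops a triple and does $O(1)$ work, or executes the final top/break check, which happens once per call. Creating a vertex, setting $\textit{node}$ and pushing onto $\textit{touched}$ happen at most once per popped triple, since a triple yields at most one new child. So $|\textit{touched}| \le p_{\bar u}$, and Phases~2 and~3 (apart from the recursive calls themselves) also cost $O(p_{\bar u})$. Crucially, there is no scan of the whole arrays: the reset in Phase~2 touches only the recorded entries, so Invariant~3 is maintained at cost proportional to $|\textit{touched}|$ rather than $n$.

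Summing over calls gives the total bound. Each call is made either at the top level (once) or from Phase~3 for a distinct pseudo-tree vertex $x$ created in the parent call. So the number of calls is $|V_T| \le L$, using Theorem~\ref{thm:pseudo-tree-space}'s bound or directly that each created vertex is charged to a popped triple. Summing $O(1+p_{\bar u})$ over all calls therefore gives $O(|V_T| + \sum p_{\bar u}) = O(L)$, and adding initialization yields $O(n+L)$ worst-case time, with no hashing involved.

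The step I expect to be the main obstacle is making rigorous that the early \textbf{break} at line~\ref{line:triple:break} never leaves a triple tagged $\bar u$ unprocessed or inspected repeatedly. Either failure would break the charging: a stranded triple means incorrect output, and repeated inspection means extra cost. For this I would rely on Invariant~2, that triples tagged $\bar u$ are contiguous at the top of $\textit{groups}[u]$ when the call starts. That invariant holds because simple paths never have $v_d^i = u$ at a position right after $u$, as discussed in Remark~\ref{rem:buildwc-scope}, so pushes made during Phase~1 never land on the stack being drained. Given this, the unsuccessful top check happens exactly once per call, and the accounting above is complete.
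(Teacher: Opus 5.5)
Your proof is correct and uses essentially the same accounting as the paper: $\Theta(n)$ for the direct-address arrays, $O(1)$ per triple pushed and popped (fewer than $L$ in total), one terminating top-check per call over $|V_T|\le L$ calls, and Phase~2/3 costs bounded by $|\textit{touched}|$. Your final paragraph is not needed for the time bound: the loop exits at its first failed check, so there is exactly one such check per call whatever the stack order, and a stranded triple would only reduce the work; moreover, a triple can legitimately be inspected by several calls (as $(3,2,x_2)$ is in the worked example), which is harmless because each such inspection is charged to a different call, so Invariant~2 matters only for correctness.
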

\begin{proof}
We assume graph vertices are represented as integers in $\{0, \ldots, n-1\}$, so that array indexing takes $O(1)$ worst-case time. We also assume that stacks support $O(1)$ worst-case push, pop, and top operations. Allocating and initializing the global arrays $\textit{seen}$, $\textit{node}$, and $\textit{groups}$, each of size $n$, takes $\Theta(n)$ worst-case time; this accounts for the additive term of the claimed bound, and every remaining cost is charged to $L$ as follows.
The key observation is that each triple $(i, d, w)$ is pushed exactly once (when created in Phase~1) and popped exactly once (when processed in the matching call $\addrecur(w, \ell(w))$). The total number of triples created is $\sum_{i=1}^{k} (|P_i| - 1) < L$, since for each path we process positions $1, 2, \ldots, |P_i|-1$.
Processing a single triple in Phase~1 takes $O(1)$ worst-case time: inspecting the top of $\textit{groups}[u]$ and checking $w \neq \bar{u}$, popping the triple, accessing $v_d^i$, and testing and updating $\textit{seen}[v]$ are all $O(1)$ worst-case operations; and, when $v$ has not been seen, creating a new pseudo-tree vertex, adding the edge, setting $\textit{node}[v]$ and pushing onto $\textit{touched}$, as well as retrieving $x = \textit{node}[v]$ and pushing the continuing segment onto $\textit{groups}[v]$, are likewise $O(1)$ worst-case. Hence Phase~1 spends $O(1)$ per triple, i.e. $O(L)$ over all triples. To this we add at most one terminating inspection per call (the mismatch $w \neq \bar{u}$ that breaks the loop); since there are at most $O(L)$ calls (one per pseudo-tree vertex created, plus the initial call), these inspections cost $O(L)$ in total.
The other two phases are charged to the same quantities. In Phase~2, resetting $\textit{seen}[v]$ and $\textit{node}[v]$ takes $O(|\textit{touched}|)$ time per call, which summed over all calls equals the total number of pseudo-tree vertices created, i.e., $O(L)$; in Phase~3, each child is handled with one pop from $\textit{touched}$ and one recursive call, again $O(L)$ in total. Therefore, the total running time is $O(n + L)$ worst-case, which is in particular $O(kn)$ since $L \leq kn$ and $n \leq kn$.
\end{proof}
\begin{theorem}[Space Complexity]
Algorithm~\ref{alg:pseudo-tree-wc} requires $O(n + L)$ space.
\end{theorem}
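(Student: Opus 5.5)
The space bound follows by splitting the memory of Algorithm~\ref{alg:pseudo-tree-wc} into three parts: the output structure, the global direct-address arrays, and the dynamic contents of the stacks, including the recursion stack. Each part is bounded by $O(n)$ or $O(L)$. The input list $\mathcal{P}$ itself occupies $\Theta(L)$ and is read only.

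\textbf{Output and global arrays.} Every pseudo-tree vertex is created in Phase~1 for a distinct triple, at most one per triple. There are fewer than $L$ triples, as counted in the proof of Theorem~\ref{thm:pseudo-tree-wc-time}. Hence $|V_T| \le L$ and, $T$ being a tree, $|E_T| = |V_T| - 1$. Together with the labeling $\ell$, the output is $O(L)$, exactly as in Theorem~\ref{thm:pseudo-tree-space}. The arrays $\textit{seen}$, $\textit{node}$, and $\textit{groups}$ have $n$ entries each, since the headers of the $n$ stacks are allocated even if empty. This gives the additive $\Theta(n)$ term.

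\textbf{Contents of the $\textit{groups}$ stacks.} Each triple is pushed exactly once and popped at most once (Invariant~1), and there are fewer than $L$ triples overall. So at any moment the total number of elements stored across all $\textit{groups}$ stacks is $O(L)$. For the linked or amortized-array stack implementation assumed earlier, the memory of the $\textit{groups}$ stacks is therefore $O(n+L)$.

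\textbf{Recursion and $\textit{touched}$ stacks.} This is the step needing the most care. The recursion depth is at most the height of $T$, which is at most $\max_i |P_i| \le n$. Each active call holds $O(1)$ local variables plus its own $\textit{touched}$ stack. The number of entries of that stack is at most the number of children of its node, because each entry corresponds to a created child. Summing over the active calls, which lie on a single root-to-node path of $T$, the total is at most the number of distinct tree vertices that are children of nodes on that path. This is bounded by $|V_T| \le L$. Thus the frames and all $\textit{touched}$ stacks together use $O(n + L)$ space.

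Adding the three contributions gives $O(n+L)$. If the recursion depth is a concern, one can instead bound it by $L$, since each frame corresponds to a distinct tree vertex; this yields the same $O(n+L)$ bound.
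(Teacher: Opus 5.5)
Your proof is correct and uses the same decomposition as the paper's: $O(n)$ for the direct-address arrays and the recursion frames (depth at most $\max_i |P_i| \le n$), and $O(L)$ for the pseudo-tree with its labeling, for the triples held in the $\textit{groups}$ stacks, and for the $\textit{touched}$ stacks. Your remark that the active frames lie on a single root-to-node path of $T$ gives a slightly sharper justification than the paper, which bounds the $\textit{touched}$ contents directly by $|V_T|$, but the argument is essentially identical.
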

\begin{proof}
The space usage consists of the following components: the global arrays $\textit{seen}$ and $\textit{node}$, each requiring $O(n)$ space; the global array $\textit{groups}$, requiring $O(n)$ space for the array structure itself; the total number of triples stored across all stacks in $\textit{groups}$ at any point in time is at most $O(L)$, since each triple is created exactly once and removed exactly once; the pseudo-tree structure with $|V_T| \leq L$ vertices and $|E_T| = |V_T| - 1 = O(L)$ edges; the labeling function $\ell$, taking $O(|V_T|) = O(L)$ space; the local $\textit{touched}$ stacks across all active stack frames. At any point during execution, the recursion stack has depth at most $\max_i |P_i| \leq n$. The total number of pairs across all active $\textit{touched}$ stacks is bounded by the total number of pseudo-tree vertices, i.e., $O(L)$.
Finally, we need to count the recursion stack itself: each stack frame uses $O(1)$ space for local variables (excluding the $\textit{touched}$ contents, which are already counted above), and the maximum depth is $n$, contributing $O(n)$ space.
Summing all contributions, the total space required is $O(n) + O(L) = O(n + L)$, which is in particular $O(kn)$.
\end{proof}
\subsection{A Worked Example of Worst-Case Pseudo-Tree Construction}
\label{sec:pseudo-tree-examples}
In what follows, we illustrate the execution of Algorithm~\ref{alg:pseudo-tree-wc} on an example in which the same vertex is encountered at different depths of the recursion, which shows why tagging segments with pseudo-tree vertices and processing the stacks in LIFO order are needed for correctness.
\blocco{Example}
Consider a graph with vertices $\{s, a, c, d, t\}$ and the following four simple $s$--$t$ paths:
\[
P_1 = (s, a, d, t), \quad P_2 = (s, a, c, t), \quad P_3 = (s, d, c, t), \quad P_4 = (s, d, a, t).
\]
The example is designed so that both $a$ and $d$ appear at different depths: $a$ at position $1$ in $P_1, P_2$ and position $2$ in $P_4$; $d$ at position $1$ in $P_3, P_4$ and position $2$ in $P_1$. It demonstrates the necessity of tagging triples with pseudo-tree vertices and the LIFO discipline.

After Phase~1 (Call $\addrecur(r, s)$): $\textit{touched} = [(d, x_2), (a, x_1)]$, $\textit{groups}[a] = [(2,2,x_1), (1,2,x_1)]$, $\textit{groups}[d] = [(4,2,x_2), (3,2,x_2)]$.
Phase~1 (Call $\addrecur(x_1, a)$) processes $\textit{groups}[a]$:
\begin{itemize}
    \item $(1,2,x_1)$: $v = d$. Create $x_3$ with $\ell(x_3) = d$. Push $(1,3,x_3)$ onto $\textit{groups}[d]$.
    \item $(2,2,x_1)$: $v = c$. Create $x_4$ with $\ell(x_4) = c$. Push $(2,3,x_4)$ onto $\textit{groups}[c]$.
\end{itemize}
Now $\textit{groups}[d] = [(4,2,x_2), (3,2,x_2), (1,3,x_3)]$ (top on right).
Phase~3 recurses: first on $(c, x_4)$, then on $(d, x_3)$.

Call $\addrecur(x_4, c)$ creates $x_6$ with $\ell(x_6) = t$ (leaf) while call $\addrecur(x_3, d)$, Phase~1, processes $\textit{groups}[d]$:
\begin{itemize}
    \item Inspect $(1,3,x_3)$: $w = x_3 = \bar{u}$ \checkmark. Pop and process. $v = t$. Create $x_5$ (leaf).
    \item Inspect $(3,2,x_2)$: $w = x_2 \neq x_3$. \textbf{Break} (triple remains in stack).
\end{itemize}
The triples $(3,2,x_2)$ and $(4,2,x_2)$ remain in $\textit{groups}[d]$ for the later call to $\addrecur(x_2, d)$.

Phase~1 (Call $\addrecur(x_2, d)$) processes $\textit{groups}[d]$:
\begin{itemize}
    \item $(3,2,x_2)$: $w = x_2 = \bar{u}$ \checkmark. $v = c$. Create $x_7$. Push $(3,3,x_7)$ onto $\textit{groups}[c]$.
    \item $(4,2,x_2)$: $w = x_2 = \bar{u}$ \checkmark. $v = a$. Create $x_8$. Push $(4,3,x_8)$ onto $\textit{groups}[a]$.
\end{itemize}
Phase~3: $\addrecur(x_8, a)$ creates $x_{10}$ ($\ell = t$); $\addrecur(x_7, c)$ creates $x_9$ ($\ell = t$).
Resulting pseudo-tree: both $a$ and $d$ appear twice in the pseudo-tree, reached via different prefixes. See Figure~\ref{fig:pseudo-tree-example}.
\begin{figure}[ht]
\centering
\begin{tikzpicture}[
    >=Stealth,
    level distance=1cm,
    sibling distance=1.4cm,
    every node/.style={draw, circle, minimum size=0.5cm, inner sep=1pt},
    edge from parent/.style={draw, ->},
    level 1/.style={sibling distance=2.5cm},
    level 2/.style={sibling distance=1.2cm},
]
\node {$s$}
    child {
        node {$a$}
        child {
            node {$d$}
            child { node {$t$} }
        }
        child {
            node {$c$}
            child { node {$t$} }
        }
    }
    child {
        node {$d$}
        child {
            node {$c$}
            child { node {$t$} }
        }
        child {
            node {$a$}
            child { node {$t$} }
        }
    };
\end{tikzpicture}
\caption{Pseudo-tree for $P_1 = (s,a,d,t)$, $P_2 = (s,a,c,t)$, $P_3 = (s,d,c,t)$, $P_4 = (s,d,a,t)$. The tagging mechanism and LIFO discipline ensure that triples are processed by the correct recursive call.}
\label{fig:pseudo-tree-example}
\end{figure}
\blocco{A key observation}
In the call $\addrecur(x_3, d)$, the algorithm inspects $(3,2,x_2)$ at the top of $\textit{groups}[d]$, identifies that it belongs to a different pseudo-tree vertex ($x_2 \neq x_3$), and terminates the loop without removing the triple. This triple remains available for the later call $\addrecur(x_2, d)$. Without the tagging mechanism, the algorithm would incorrectly process $(3,2,x_2)$ as part of the subtree rooted at $x_3$, leading to an invalid pseudo-tree.

\subsection{Updating Solutions to \spksispshort under Changes to \texorpdfstring{$k$}{k}}
In what follows, we present a new algorithm, named \dynk, for updating solutions to \spksispshort under changes to parameter $k$.
The algorithm, whose pseudo-code is given in Algorithm~\ref{algo:dynamic_k}, is based on representing a solution to \spksispshort, which is a collection of simple paths, in the form of a pseudo-tree. It takes as input a weighted digraph $G=(V,E,\omega)$, a pair of vertices $s,t \in V$ for which we have a collection \sol of top-$k$ simple shortest paths from $s$ to $t$, and an integer $k'\neq k$, and returns the updated collection \sol of top-$k'$ simple shortest paths from $s$ to $t$.
\begin{algorithm}[t]
\small
\SetAlgoLined
\KwIn{Weighted digraph $G=(V,E,\omega)$, collection \sol of top-\texorpdfstring{$k$}{k} simple shortest paths from $s$ to $t$, integer $k'\neq k, k'>0$.}
\KwOut{Collection \sol of top-$k'$ simple shortest paths from $s$ to $t$.}
\If{$k'<k$}{
$\sol \gets k'$ lightest paths in \sol\;
\KwRet \sol\;
}
$T \gets \build(G,\sol)$\tcp*{Build pseudo-tree (Algorithm~\ref{alg:pseudo-tree} or~\ref{alg:pseudo-tree-wc})}
$Q\gets \emptyset$; $\found \gets \emptyset$\;
\ForEach{vertex $h \in V_T$}{
$\expand(T,G,h,Q,\found)$\tcp*{Seed $Q$ (Procedure~\ref{proc:expand})}
}
\While(\label{line:dynk:while}){$Q \neq \emptyset$}{
$\langle \Pi, h, S\rangle \gets Q.dequeueMin()$; $\found \gets \found \setminus \{\Pi\}$\;
$\sol \gets \sol \cup \{\Pi\}$\;
\lIf{$|\sol|=k'$}{\KwRet \sol}
$(T, W) \gets \merge(T,h,S)$\tcp*{Merge into $T$; $W$: vertices to (re-)expand (Procedure~\ref{algo:merge_and_return})}
\lForEach{$h \in W$}{
$\expand(T,G,h,Q,\found)$}
}
\KwRet \sol
\caption{\dynk: Updating a solution to \spksispshort under changes of $k$}
\label{algo:dynamic_k}
\end{algorithm}
{\renewcommand{\algorithmcfname}{Procedure}%
{\renewcommand{\algorithmcfname}{Procedure}%
\begin{algorithm}[t]
\small
\SetAlgoLined
\KwIn{Pseudo-tree $T=(V_T,E_T)$ with labeling $\ell$; weighted digraph $G$; vertex $h\in V_T$; min-priority queue $Q$; membership set \found\ of the paths currently in $Q$.}
\KwOut{If $h$ admits a feasible deviation not already in $Q$, the corresponding candidate is enqueued in $Q$, together with the vertex $h$ it deviates from and its suffix $S$.}
$v\gets \ell(h)$\;
\lIf{$v=t$}{\KwRet}
$\Gamma \gets$ path from $\text{root}(T)$ to $h$ in $T$\;
$P\gets$ path of $G$ induced by the labels of $\Gamma$, i.e., $P = (\ell(x_0), \ldots, \ell(h))$\;
$H\gets G$\;
\lForEach{$u \in \nout_T(h)$}{
$H \gets H\setminus\{(v,\ell(u))\}$}
$H \gets H \setminus \big(V(P) \setminus \{v\}\big)$\;
$S\gets \dij(H,v,t)$\;
\If{$S\neq \emptyset$}{
$D\gets P\oplus S$\;
\If{$D \notin \found$}{
$Q.enqueue(\wg{D},\langle D, h, S\rangle)$\tcp*{keep the deviation vertex and suffix}
$\found \gets \found \cup \{D\}$\;
}
}
\caption{\expand: Generate a shortest feasible deviation for a pseudo-tree vertex}
\label{proc:expand}
\end{algorithm}}
\begin{algorithm}[ht]
\small
\SetAlgoLined
\KwIn{Pseudo-tree $T = (V_T, E_T)$ with labeling $\ell$; deviation vertex $h \in V_T$ and suffix $S = (u_0 = \ell(h), u_1, \ldots, u_r = t)$ of the extracted candidate, as recorded by \expand.}
\KwOut{Updated pseudo-tree $T$; set $W$ of vertices to be (re-)expanded.}
$x \gets h$; $i \gets 1$\;
\While(\tcp*[f]{no-op unless ties reordered extractions}){$i \leq r$ \textbf{and} $x$ has a child $y$ with $\ell(y) = u_i$}{
    $x \gets y$; $i \gets i+1$\;
}\label{line:merge:branch}
$W \gets \{h, x\}$\;
\For{$j \gets i$ \KwTo $r$}{
    Create new vertex $z$ with $\ell(z) \gets u_j$\;
    $V_T \gets V_T \cup \{z\}$; $E_T \gets E_T \cup \{(x, z)\}$\;
    $W \gets W \cup \{z\}$; $x \gets z$\;
}
\KwRet $(T, W)$\;
\caption{\merge: Append a generated candidate below its deviation vertex}
\label{algo:merge_and_return}
\end{algorithm}}
Specifically, the algorithm first handles the easy case: if fewer paths are requested ($k'<k$), it simply selects and returns the $k'$ lightest paths already contained in \sol, and terminates. 
When $k'>k$, the algorithm must extend the solution with additional simple shortest paths. It first builds a pseudo-tree $T$ from \sol (via Algorithm~\ref{alg:pseudo-tree} or Algorithm~\ref{alg:pseudo-tree-wc}), so that prefixes shared by multiple paths are represented only once, and initializes a min-priority queue $Q$ of candidate paths.
The generation of a candidate from a single pseudo-tree vertex is encapsulated in the procedure \expand (Procedure~\ref{proc:expand}). Given a vertex $h$ with $v=\ell(h)$, \expand considers the root-to-$h$ path in $T$, whose labels induce a path $P$ in $G$ ending at $v$, and builds a subgraph $H$ of $G$ that does not contain (i) all vertices of $P$ except $v$ (to enforce simplicity) and (ii) the edges $(v,\ell(h'))$ corresponding to children $h'$ of $h$ in $T$ (to avoid reproducing already found paths). In $H$ it runs a shortest-path subroutine $\dij(H,v,t)$ and, if a suffix $S$ exists, enqueues the deviation $D=P\oplus S$ with priority $\wg{D}$, unless $D$ is already in $Q$: the membership set \found, implemented as a hash set keyed on the vertex sequence of the path, keeps the queue duplicate-free at a cost of $\bigO(n)$ per test, which is dominated by the shortest-path computation just performed.
Note that any leaf labeled $t$ is a complete path and admits no deviation, so \expand returns immediately on it. The algorithm seeds $Q$ by invoking \expand on every vertex of $T$: as each vertex is a distinct prefix of a stored path (a spur candidate in Yen's terminology), this reconstructs, from the solution alone, exactly the frontier of detours a from-scratch run would have accumulated.
After seeding $Q$, the algorithm repeatedly extracts from $Q$ the minimum-weight path $\Pi$, which is the next simple shortest path. Each extracted path is added to \sol and to the pseudo-tree $T$ via \merge (Procedure~\ref{algo:merge_and_return}), which appends only the previously missing suffix as a new branch. Note that \merge performs no search for the longest represented prefix: every candidate is generated by \expand as a detour, so its deviation vertex $h$ and its suffix $S$ are known and are recorded in $Q$ alongside the candidate, and the insertion reduces to hanging $S$ below $h$ in $\bigO(|S|)$ time. The only care needed is that $T$ may have grown below $h$ between the generation of the candidate and its extraction, which can happen only among candidates of equal weight; procedure \merge therefore descends from $h$ as long as the labels of $S$ match existing children (line~\ref{line:merge:branch}), and appends the rest. Besides the updated tree, \merge returns the set $W$ to (re-)expand: the newly created suffix vertices, the vertex that acquired a new child, and $h$ itself, whose own candidate has just been consumed. If $|\sol|=k'$ the algorithm returns; otherwise it invokes \expand on every vertex of $W$, leaving all other vertices untouched, as their children, and hence their candidates in $Q$, are unchanged. In this way, the solution is grown in non-decreasing weight order while the pseudo-tree guides the search and prevents redundant exploration of shared prefixes. We prove the following results.
Observe that the membership test of \expand might be overly cautious: $T$ evolves and we do not have the guarantee that a path generated from $h$ is not generated again from a vertex created below $h$ later during the execution, with the second generation preceding the extraction of the first when the two candidates have equal weight. This is also the only way $T$ can grow below $h$ before the candidate of $h$ is extracted. Discarding the second copy loses no candidate, since extracting the copy already in $Q$ triggers a \merge\ whose branching vertex $x$ is precisely the vertex that would have re-generated it, which is therefore re-expanded.
\begin{theorem}[Correctness]
\label{thm:dynk-correctness}
Algorithm~\ref{algo:dynamic_k} computes a collection of top-$k'$ simple shortest paths.
\end{theorem}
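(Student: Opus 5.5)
The case $k'<k$ is handled first. Let $\sol$ be a top-$k$ collection and take its $k'$ lightest paths. Condition~3 of Definition~\ref{def:spksisp} for the original collection says that every simple $s$--$t$ path outside $\sol$ weighs at least as much as every path in $\sol$. A discarded path of $\sol$ weighs at least as much as each retained one, by the choice of the lightest. Hence conditions~1--3 hold for the truncated collection. For maximality, if $|\sol|=k\ge k'$ we obtain exactly $k'$ paths; if $|\sol|<k$ then $\sol$ already contains all simple $s$--$t$ paths, so its lightest $\min(k',|\sol|)$ paths form a maximal collection.

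For $k'>k$, the plan is to prove by induction on the iterations of the while loop (line~\ref{line:dynk:while}) two invariants.

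\textbf{(I1)} At the start of each iteration, $T$ is a pseudo-tree for $\sol$. Initially this follows from Theorem~\ref{thm:pseudo-tree-correctness} (or its worst-case counterpart). \merge preserves it, because it descends along existing children with matching labels and creates new vertices only below the first mismatch. This maintains distinct sibling labels and hence prefix uniqueness, exactly as in the proof of Theorem~\ref{thm:pseudo-tree-correctness}.

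\textbf{(I2)} For every simple $s$--$t$ path $D\notin\sol$, let $h$ be the vertex of $T$ realizing the longest prefix of $D$ present in $T$. Then $Q$ contains a candidate of weight at most $\wg{D}$; more precisely, the current candidate enqueued by $\expand(T,G,h,\cdot)$ is present in $Q$. The key fact is that the suffix of $D$ after $h$ avoids $V(P)\setminus\{\ell(h)\}$, by simplicity of $D$. Its first edge is not to the label of any child of $h$, by maximality of the prefix. So this suffix is a feasible path in the graph $H$ built by \expand, and the shortest-path computation returns a suffix $S$ with $\wg{P\oplus S}\le \wg{D}$. Moreover, $P\oplus S$ is simple and not in $\sol$.

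Given (I1) and (I2), the extracted $\Pi$ is a simple path not in $\sol$ of minimum weight among all simple paths outside $\sol$. Every extracted path has weight at least that of earlier ones, and by hypothesis all original paths weigh at most any path outside them. Together these give conditions~1--3 after each addition. If $Q$ empties, (I2) implies that no simple path remains outside $\sol$, which gives maximality. Termination with $|\sol|=k'$ gives the required size. A subtle point is that the dequeued path is not already in $\sol$: this follows because candidates are created with their first edge after $h$ excluded among children of $h$, and \found prevents duplicates in $Q$.

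The main obstacle is maintaining (I2) after a \merge. The candidate of a vertex depends only on its root path and its set of children. After merging, the only vertices whose longest-prefix role or children set changes are the following: $h$, whose candidate was consumed; the branching vertex $x$, which gained a child; and the new suffix vertices. All of these lie in $W$ and are re-expanded. Any other vertex keeps its children, so its candidate in $Q$ is still valid.

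Two remaining cases need care. The first is a stale candidate whose path became part of $\sol$ through another extraction; this can occur only among equal-weight ties. The second is a candidate discarded by the \found test. For the discarded case, I would use the paper's remark: the copy already in $Q$, when extracted, triggers a \merge whose branching vertex is exactly the vertex that would regenerate it, so that vertex is re-expanded. For stale ties, I would argue that \merge's descent handles an already-represented path: either it appends nothing new, or the invariant can be restated using weights only, since a tie of equal weight suffices for condition~3. I expect this tie bookkeeping to be the delicate part, and I would handle it by stating (I2) in terms of weights only.
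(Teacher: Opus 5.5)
Your proposal is correct and follows essentially the same route as the paper's proof. It uses the same invariant, tying each path outside \sol\ to the candidate of its longest-represented-prefix vertex, maintains it by re-expanding exactly the set $W$ returned by \merge, and appeals to \found\ and the regeneration remark for equal-weight ties. One tightening: the ``stale'' case you worry about, a queued path already in \sol, cannot arise, because \found\ keeps a single copy of each path in $Q$ and a path represented in $T$ can never be generated (every proper ancestor on it forbids its next edge, and its $t$-leaf is skipped), so you should keep (I2) in its structural form (some minimum-weight feasible deviation of the longest-prefix vertex is present in $Q$, whichever vertex enqueued it) rather than weakening it to weights only, which is not inductive on its own.
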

\begin{proof}
If $k'<k$, the algorithm returns the $k'$ lightest paths among those in \sol.
Since \sol is a valid top-$k$ solution, every simple $s$--$t$ path not in \sol has weight at least that of the $k$-th path; hence the $k'$ lightest paths in \sol form a valid top-$k'$ solution (ties can be handled arbitrarily). This also covers the case in which fewer than $k$ (or $k'$) simple $s$--$t$ paths exist.
Assume now $k'>k$. The algorithm first builds the pseudo-tree $T$ representing exactly the $k$ shortest simple $s$--$t$ paths in \sol.
By construction, $T$ satisfies the prefix-uniqueness property: for every label sequence $(v_0 = s, \ldots, v_j)$ that is a prefix of at least one path in \sol, there exists exactly one vertex $h \in V_T$ whose root-to-$h$ path has that sequence of labels.
For a vertex $h$ of $T$ with $\ell(h)=v\neq t$, procedure \expand considers the $s$--$v$ path $P$ induced by the labels of the root-to-$h$ path and the subgraph $H$ obtained from $G$ by
(i) deleting all vertices of $P$ except $v$ (to enforce simplicity), and
(ii) deleting all graph edges $(\ell(h), \ell(h'))$ for each child $h'$ of $h$ in $T$ (to avoid reproducing already stored branches);
if a $v$--$t$ path exists in $H$, it inserts into $Q$ the path $D=P\oplus S$, where $S$ is a shortest $v$--$t$ path in $H$. Vertices labeled $t$ are leaves and are skipped, as they represent complete paths admitting no deviation.
 
We maintain the following invariant throughout the while-loop:
\begin{quote}
\emph{At the beginning of each iteration, $Q$ contains exactly one entry for each vertex $h$ of the current pseudo-tree $T$ with $\ell(h)\neq t$ that admits a feasible deviation, namely the candidate $D$ produced by $\expand(T,G,h,\cdot)$, and no other element. Consequently, the minimum-weight element of $Q$ has minimum weight among all simple $s$--$t$ paths not yet in \sol.}
\end{quote}
 
\noindent\textbf{Initialization.}
The seeding loop invokes \expand on each vertex of the initial pseudo-tree exactly once, inserting into $Q$ the shortest feasible deviation for the corresponding prefix, if any.
Consider any simple $s$--$t$ path $R$ not in \sol.
Let $h$ be the unique vertex of $T$ whose root-to-$h$ label sequence is the longest prefix of $R$ represented in $T$. Since $R$ is simple and not represented in $T$, this prefix ends before $t$, i.e. $\ell(h)=v\neq t$, so $h$ is not skipped by \expand.
By maximality of the prefix, the next edge of $R$ after $v$ does not correspond to any pseudo-tree edge $(h, h')$, and therefore it is not removed when constructing the subgraph $H$ for $h$.
Moreover, since $R$ is simple, it avoids all vertices of $P \setminus \{v\}$.
Thus $R$ is feasible in $H$, and \expand inserts a candidate $D=P\oplus S$ whose weight satisfies $\wg{D}\le \wg{R}$, because $S$ is a shortest $v$--$t$ path in $H$.
It follows that the minimum-weight element of $Q$ has weight no larger than any simple $s$--$t$ path not in \sol, and the invariant holds initially.
 
\noindent\textbf{Maintenance.}
At each iteration, the algorithm extracts from $Q$ a minimum-weight candidate path $\Pi$ and adds it to \sol.
By the invariant, $\Pi$ has minimum weight among all simple $s$--$t$ paths not yet in \sol, and is therefore the next simple shortest path.
The algorithm then merges $\Pi$ into the pseudo-tree, hanging below the deviation vertex $h$ recorded with $\Pi$ only the previously missing suffix, and returns the set $W$ of vertices to be re-expanded.
The merge changes the pseudo-tree in two ways only: the vertex below which the suffix is hung acquires one new child, and the newly created suffix vertices are added (each, except the leaf, with a single child along the new branch).
$W$ consists exactly of these vertices, together with $h$: the candidate that $h$ had contributed to $Q$ has just been extracted, so $h$ must be re-expanded to yield its next-best deviation.
The two coincide unless ties caused $T$ to grow below $h$ before the extraction, in which case \merge\ descends to the vertex that actually acquires the new child and both are included in $W$.
By invoking \expand on every vertex of $W$, the algorithm recomputes the shortest feasible deviation of the branching vertex, now that one further outgoing edge (toward the new child) is forbidden, and computes the first deviation of each newly created vertex. All vertices outside $W$ keep the same set of children, hence their candidates in $Q$ remain the shortest feasible ones and need not be recomputed.
Therefore the invariant is restored for the next iteration, and the minimum-weight element of $Q$ is again the next simple shortest path.
 
Finally, no complete $s$--$t$ path is ever added to \sol\ more than once.
Fix a state of the pseudo-tree. Then any simple path $R$ can be generated only from the unique vertex corresponding to the longest prefix of $R$ represented in $T$: from any proper ancestor, the next edge of $R$ corresponds to a pseudo-tree edge to one of its children and is therefore removed, while from any other vertex the generated prefix differs from that of $R$.
In particular, a path already represented in $T$, hence any path already in \sol, cannot be generated at all, and the deviation produced when the branching vertex is re-expanded after a merge avoids the newly added child, hence differs from $\Pi$.

Across different states of the pseudo-tree, however, the vertex realizing the longest represented prefix of $R$ changes: after a \merge, the newly created suffix vertices may realize a longer prefix of $R$ than the vertex that generated $R$ earlier, and re-expanding them can generate $R$ a second time.
This is possible only when the two candidates have equal weight.
Indeed, let $R$ be generated from $h$ and let $\Pi$ be the path whose merge creates a descendant of $h$ that can regenerate $R$; then $\Pi$ leaves $h$ through the same edge as $R$, so $\Pi$ was a feasible deviation of $h$ when $h$ was last expanded, whence $\wg{R} \leq \wg{\Pi}$ by minimality of \expand. If the inequality is strict, $R$ is extracted before $\Pi$ is merged and is represented in $T$ from that moment on, so it cannot be regenerated; if $\wg{R} = \wg{\Pi}$ and the queue extracts $\Pi$ first, a second copy of $R$ may be produced.
The membership set \found\ discards it, and no candidate is lost: the extraction of the copy already in $Q$ triggers a \merge\ whose branching vertex is exactly the vertex that would have regenerated $R$, which is thus re-expanded and contributes its next-best deviation.
Since $Q$ contains at most one copy of each path and every extracted path is immediately represented in $T$, duplicates are impossible.
 
\noindent\textbf{Termination.}
The loop terminates either when $|\sol|=k'$ or when $Q$ becomes empty.
In the first case, the algorithm has appended $k'-k$ successive simple shortest paths to the initial \sol, and therefore \sol is a valid top-$k'$ solution.
In the second case, by the invariant, no further simple $s$--$t$ path exists outside \sol, and the algorithm correctly returns all simple $s$--$t$ paths.
\end{proof}
\begin{theorem}[Time Complexity]
\label{thm:dynk-time}
Algorithm~\ref{algo:dynamic_k} runs in $\bigO(k + k'n \cdot \textsc{sp}(n,m))$ worst-case time; more precisely, it runs in $\bigO(k)$ time if $k'<k$, and in $\bigO\big((L + (k'-k)n) \cdot \textsc{sp}(n,m)\big)$ time if $k'>k$, where $L$ is the total length of the input collection \sol.
\end{theorem}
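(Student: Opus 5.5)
The proof splits on the two branches of Algorithm~\ref{algo:dynamic_k} and bounds each separately. The final $\bigO(k + k'n\cdot\textsc{sp}(n,m))$ form then follows from $L\le kn$ and $k<k'$ in the second case.

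\emph{Case $k'<k$.} The algorithm only selects the $k'$ lightest paths of \sol. If \sol is stored in non-decreasing order of weight, as any top-$k$ solver produces it, this is a truncation in $\bigO(k')$ time. Otherwise a linear-time selection over the $k$ path weights gives $\bigO(k)$. Either way the bound is $\bigO(k)$, assuming weights of stored paths are available in $\bigO(1)$ each.

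\emph{Case $k'>k$.} I would bound four contributions in turn.

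\begin{itemize}
\item \textbf{Construction.} By Theorem~\ref{thm:pseudo-tree-time} (or Theorem~\ref{thm:pseudo-tree-wc-time}) it costs $\bigO(L)$ expected (or $\bigO(n+L)$ worst case). This is dominated by the later terms, since $\textsc{sp}(n,m)=\Omega(n)$.
\item \textbf{Seeding.} By Theorem~\ref{thm:pseudo-tree-space}, $|V_T|\le L$, so there are at most $L$ calls to \expand.
\item \textbf{Cost of one \expand call.} Building $P$ and $H$ takes $\bigO(n+m)$. This is done implicitly by marking forbidden vertices and edges rather than copying $G$, and resetting the marks afterwards. The call then runs one $\dij$ in $\textsc{sp}(n,m)$, and builds $D$ and performs the \found test in $\bigO(n)$. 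So each call costs $\bigO(\textsc{sp}(n,m))$, and seeding totals $\bigO(L\cdot\textsc{sp}(n,m))$.
\item \textbf{Main loop.} It performs at most $k'-k$ extractions. Each extraction triggers one \merge, which costs $\bigO(|S|)\subseteq\bigO(n)$ and returns a set $W$ with $|W|\le |S|+2\le n+1$. So each iteration issues $\bigO(n)$ further \expand calls, i.e.\ $\bigO(n\cdot\textsc{sp}(n,m))$ per iteration. Queue operations add $\bigO(\log|Q|)$ each. Since $|Q|\le |V_T|\le L+(k'-k)n$, this is polynomially bounded and absorbed by $\textsc{sp}(n,m)$.
\end{itemize}

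Summing gives $\bigO\big((L+(k'-k)n)\cdot\textsc{sp}(n,m)\big)$, and $L\le kn$ yields $\bigO(k'n\cdot\textsc{sp}(n,m))$.

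The main obstacle is making the per-call cost of \expand genuinely $\bigO(\textsc{sp}(n,m))$ rather than $\bigO(m)$ plus copying overhead. The statement $H\gets G$ must be read as a logical restriction of $G$, so that deleting $V(P)\setminus\{v\}$ and the out-edges to children of $h$ costs only $\bigO(n+\deg_T(h))$. The sum of $\deg_T(h)$ over all calls also needs a bound. I would argue that $\deg_T(h)\le n$, which gives $\bigO(n)$ per call and hence $\bigO(\textsc{sp}(n,m))$.

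A secondary care point concerns the \found hash test and path hashing. These must cost $\bigO(n)$ expected per call, so I would state the bound, as in Theorem~\ref{thm:pseudo-tree-time}, up to expectation for the hash operations. Alternatively, one could note that the tie-induced duplicates can be detected deterministically when two candidates of equal weight meet, though I would not pursue that.
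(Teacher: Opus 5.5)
Your proof is correct and follows essentially the same decomposition as the paper's: linear-time selection for $k'<k$; for $k'>k$, pseudo-tree construction, at most $L$ seeding calls to \expand each costing $\bigO(n)+\textsc{sp}(n,m)$ via flagging instead of copying $G$, and at most $k'-k$ merges each triggering $\bigO(n)$ re-expansions. Your explicit accounting of priority-queue operations and your caveat that the hash-based \found test is only expected $\bigO(n)$ go slightly beyond the paper's proof, which glosses over both, but neither changes the bound.
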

\begin{proof}
If $k'<k$, the algorithm returns the $k'$ lightest paths among the already available $k$ ones. No assumption on the order of the input collection is needed: using a worst-case linear-time selection algorithm (e.g., quickselect with median-of-medians pivoting~\cite{BlumFPRT73}), the $k'$ lightest paths can be identified in $\bigO(k)$ time, which is accounted for by the additive $\bigO(k)$ term of the claimed bound; this is also optimal for this case, as any correct algorithm must inspect the weights of all $k$ input paths.
 
Assume now $k'>k$; we prove the refined bound $\bigO\big((L + (k'-k)n)\cdot\textsc{sp}(n,m)\big)$, from which the unified bound follows since $L \leq kn$ implies $L + (k'-k)n \leq k'n$ (and reading the input costs $\bigO(L)$, absorbed below). Building the pseudo-tree from \sol takes $\bigO(n + L)$ worst-case time by Theorem~\ref{thm:pseudo-tree-wc-time} (using Algorithm~\ref{alg:pseudo-tree-wc}); as shown below, this cost is dominated by the seeding of the queue, since $\textsc{sp}(n,m) = \Omega(n)$.
Each call to \expand performs $\bigO(n)$ work to construct the subgraph $H$ (removing the at most $n$ prefix vertices and the graph edges corresponding to the pseudo-tree edges from $h$ to its children) and one shortest-path computation; hence it costs $\bigO(n + \textsc{sp}(n,m)) = \bigO(\textsc{sp}(n,m))$ time, since $\textsc{sp}(n,m) = \Omega(n)$. Note that the assignment $H\gets G$ in Procedure~\ref{proc:expand} is a notational convenience: $H$ is never materialized as an explicit copy of $G$, which would cost $\Theta(n+m)$ per call. In the implementation, prefix vertices and forbidden outgoing edges are simply flagged as temporarily removed in $G$ before the shortest-path query and unflagged right after it, at a cost proportional to the number of such elements, i.e., $\bigO(n)$ here.
The membership test against \found\ adds $\bigO(|D|) = \bigO(n)$ expected time per call, as \found\ is a hash set keyed on the vertex sequence of the candidate; this too is dominated by $\textsc{sp}(n,m)$, so the bound is unaffected.
The seeding loop invokes \expand once for each vertex of the initial pseudo-tree. As the latter has at most $L$ vertices (Theorem~\ref{thm:pseudo-tree-space}), seeding costs $\bigO(L \cdot \textsc{sp}(n,m))$, which dominates the $\bigO(n+L)$ construction cost.
In the while-loop, each of the $k'-k$ paths added beyond the initial $k$ triggers one call to \merge, costing $\bigO(|S|) = \bigO(n)$ (the appended suffix has at most $n$ vertices and no search for the longest represented prefix is performed), which returns a set $W$ of at most $n+2$ vertices (the deviation vertex $h$, the vertex acquiring the new child, and at most $n$ newly created vertices). The algorithm then invokes \expand on each of them. Hence the while-loop performs $\bigO((k'-k)n)$ calls to \expand, costing $\bigO((k'-k)n \cdot \textsc{sp}(n,m))$.
Summing all contributions, the total running time for $k'>k$ is $\bigO\big(n + L + L \cdot \textsc{sp}(n,m) + (k'-k)n \cdot \textsc{sp}(n,m)\big) = \bigO\big((L + (k'-k)n) \cdot \textsc{sp}(n,m)\big)$, using $\textsc{sp}(n,m) = \Omega(n)$ to absorb the additive $\bigO(n+L)$ term. Together with the $\bigO(k)$ cost of the case $k'<k$, and since $L \leq kn$, the unified $\bigO(k + k'n\cdot\textsc{sp}(n,m))$ bound follows.
\end{proof}
Note that the dominant term in the running time for $k'>k$ comes from the seeding phase, which re-derives the shortest feasible deviation of every one of the $\bigO(L)$ pseudo-tree vertices. 
The bound is thus comparable to recomputation in the worst case, and improves on it whenever $L \ll kn$, as is typical in practice (Section~\ref{sec:experiments}).
\begin{remark}\label{rem:any-collection}
\dynk relies neither on \sol being a top-$k$ solution nor on how it was computed: \expand generates, at each pseudo-tree vertex, the lightest simple $s$--$t$ path leaving $T$ there, so the queue minimum is always the lightest simple $s$--$t$ path not in \sol. Hence, started from \emph{any} collection of simple $s$--$t$ paths and run until the queue minimum reaches the weight of the $k'$-th lightest path in \sol (the stopping rule of \capheu below), it yields a top-$k'$ solution once the paths heavier than the $k'$-th are purged; in particular, it can resume a solver-independent partial enumeration such as the one produced by the single-source algorithm of~\cite{DEmidioS26}.
\end{remark}
Note also that \dynk admits several variants that share this skeleton but differ in how detours are computed, and that combine with standard \spksispshort preprocessing; we evaluate them in Sec.~\ref{sec:experiments}. The only point at which \dynk inspects the graph is the detour computation inside \expand. As presented, \expand computes it \emph{from scratch}, via a single-pair shortest-path call on $H$ as in \yenalgo; we call this instantiation \dynkyen. The detour computation, however, is independent of the rest of the algorithm and can be replaced by any subroutine that returns, for a given prefix and a given set of forbidden continuations at $v$, a shortest simple $v$--$t$ detour: the pseudo-tree representation, the queue-driven update loop, and the order in which paths are produced are unaffected by this choice.
Thus, alternatively, the detour subroutine can follow \pnc\ (Section~\ref{sec:prelim}): its shortest-path tree towards $t$ and residual weights are precomputed once and reused across all expansions, with lower-bound-guided generation and postponed simplicity repair, instead of a from-scratch shortest-path call per pseudo-tree vertex. This yields the variant \dynkpnc, which is correct as well: for each prefix it returns the same shortest feasible detour as \dynkyen, hence the same paths in the same non-decreasing weight order. Generally, \dynk can be seen as a \emph{scheme} parameterized by the detour subroutine, and different subroutines trade preprocessing cost against per-detour cost.
Orthogonally, note that every variant can be sped up by first applying the \peek\ graph-reduction preprocessing (Section~\ref{sec:prelim}): the chosen \dynk\ variant is then run on the reduced graph instead of on $G$. Since the reduction discards only vertices and edges that cannot lie on any top-$k'$ simple $s$--$t$ path, the output is preserved; as with any such algorithm-agnostic preprocessing, this affects only efficiency, not correctness.
Finally, we formalize a lightweight \emph{online} pruning rule which, to the best of our knowledge, has not been described in the literature, and which can be layered on top of every algorithm considered so far; we call it \capheu, for \emph{Cheapest-Alternatives Pruning}. Let $\kappa$ denote the number of paths sought ($\kappa = k$ for a from-scratch execution, $\kappa = k'$ for an update). Alongside the priority queue of candidates, the algorithm maintains a max-heap $M$ of capacity $\kappa$: whenever a candidate path is generated, its weight is also inserted into $M$ and, upon the $(\kappa+1)$-th insertion, the maximum of $M$ is discarded, so that $M$ retains, at all times, the weights of the $\kappa$ \emph{lightest} pairwise distinct candidate paths generated so far. Once $M$ reaches size $\kappa$, its maximum $B$ (the $\kappa$-th lightest such weight) upper-bounds the weight of every path of \emph{some} top-$\kappa$ solution: the paths accounted for in $M$ are $\kappa$ pairwise distinct valid simple $s$--$t$ paths of weight at most $B$, so, by Def.~\ref{def:spksisp}, a valid solution exists that excludes any path of weight $\geq B$ (ties are broken in favor of the paths of $M$). This enables three prunings: (i) any newly generated candidate of weight $\geq B$ is discarded instead of being enqueued; (ii) any detour computation is aborted as soon as the weight of the prefix plus the partial (monotonically non-decreasing) distance reaches $B$; this applies both to the shortest-path calls of Yen-style detours and to the shortest-path-tree-guided generation of \pnc-style ones; (iii) since $B$ is non-increasing over time, pruning becomes progressively more aggressive as the execution proceeds. As with \peek, this affects only efficiency, never correctness or the produced order.
\capheu composes naturally with the rest of our toolbox. Combined with \peek, $M$ can be pre-filled with the (up to $\kappa$) paths produced while computing \peek's upper bound, activating the pruning from the very start. Combined with \dynk, $M$ is pre-filled with the paths of the input top-$k$ solution and maintained with capacity $\kappa = k'$: for $k'>k$ the bound activates as soon as $k'$ distinct candidates are known, and every seeding-phase expansion already benefits from prunings (i) and (ii).
In Section~\ref{sec:experiments} we compare variants \dynkyen and \dynkpnc, with and without the \peek preprocessing and the \capheu pruning, to identify the configurations that perform best in practice; the evaluation also covers the pseudo-tree construction algorithms.
\section{Experimental Evaluation}
\label{sec:experiments}
To evaluate the algorithms from Section~\ref{sec:algorithms}, we implemented them and conducted an extensive experimental study with two main objectives. First, we compare the pseudo-tree construction algorithms. In fact, although \buildwc runs in $O(n+L)$ \emph{worst-case} time, unlike the $O(L)$ \emph{expected} bound of \build, the inputs underlying its analysis appear uncommon in practice, since typical \spksispshort solutions tend to contain much shorter, highly prefix-sharing paths. Thus, we evaluate which construction performs better on realistic instances.
We then assess the practical benefit of \dynk. Although its worst-case time is comparable to recomputation (Theorem~\ref{thm:dynk-time}), in practice only a small subset of pseudo-tree vertices induce feasible deviations, making updates cheaper; we quantify this advantage, compare \dynkyen and \dynkpnc, and evaluate the effect of \capheu and \peek-based preprocessing.
\blocco{Implementation and Setup}
\begin{table}[t]
\centering
\renewcommand{\arraystretch}{0.8}
\footnotesize
\begin{tabular}{c|c|c|cc|ccc|cc|c}
\textbf{Dataset} & \textbf{Graph} & \textbf{Type} & $|V|$ & $|E|$ & \textbf{D} & \textbf{W} &\textbf{S}& \avgdeg & \maxdeg & \diam \\
\midrule
\caida & \caidass & \textsc{internet} & 32\,000 & 40\,204 & \false{2pt} & \true{2pt} & \false{2pt} & 2.51 & 203 & 31 \\
\lux & \luxss & \textsc{road} & 30\,647 & 75\,546 & \true{2pt} & \true{2pt} & \false{2pt} & 4.93 & 18 & 253 \\
\efz & \efzss & \textsc{railway} & 117\,942 & 380\,194 & \true{2pt} & \true{2pt} & \false{2pt} & 6.45 & 44 & 163 \\
\epini & \epiniss & \textsc{social} & 41\,441 & 693\,507 & \true{2pt} & \false{2pt} & \false{2pt} & 33.47 & 3622 & 11 \\
\dnk & \dnkss & \textsc{road} & 469\,110 & 1\,090\,038 & \true{2pt} & \true{2pt} & \false{2pt} & 4.65 & 14 & 640 \\
\forest & \forestss & \textsc{forest-fire} & 2\,000\,000 & 14\,908\,267 & \false{2pt} & \false{2pt} & \true{2pt} & 14.91 & 41\,879 & 24 \\
\end{tabular}
\caption{\small Overview of the input graphs: dataset name, acronym, and network type; number of vertices and edges; flags \textbf{D}/\textbf{W}/\textbf{S} for directed/weighted/synthetic (\true{2pt}/\false{2pt}); average degree (\avgdeg${}=2|E|/|V|$), maximum degree (\maxdeg), and diameter (\diam, estimated for the largest instances). For disconnected graphs, values refer to the largest (strongly) connected component, on which experiments run. Sorted by non-decreasing $|E|$.}
\label{table:input}
\end{table}
\begin{table}[t]
\centering
\caption{\small Running time (ms) of the two construction algorithms vs.\ the number $k$ of input paths; the fifth column is the slowdown of \buildwc over \build, the last the average pseudo-tree size. Values are averaged over all runs with the given $k$ in the reference configuration.}
\label{table:pseudotree-times}
\renewcommand{\arraystretch}{0.5}
\setlength{\tabcolsep}{4.5pt}
\footnotesize
\begin{tabular}{c|c|rr|c|r}
\textbf{Graph} & $k$ & \build & \buildwc & \shortstack{\buildwc\,/\\ \build} & $|V_T|$ \\
\midrule
\multirow{5}{*}{\caidass} & $50$ & $0.06$ & $1.02$ & $16.0\times$ & $422$ \\
 & $100$ & $0.12$ & $1.17$ & $9.8\times$ & $799$ \\
 & $200$ & $0.23$ & $1.39$ & $6.0\times$ & $1\,545$ \\
 & $400$ & $0.48$ & $1.76$ & $3.7\times$ & $3\,038$ \\
 & $800$ & $0.99$ & $2.57$ & $2.6\times$ & $6\,009$ \\
\midrule
\multirow{5}{*}{\luxss} & $50$ & $0.34$ & $1.62$ & $4.8\times$ & $2\,660$ \\
 & $100$ & $0.60$ & $2.14$ & $3.6\times$ & $4\,881$ \\
 & $200$ & $1.12$ & $3.22$ & $2.9\times$ & $8\,758$ \\
 & $400$ & $2.20$ & $4.96$ & $2.3\times$ & $16\,396$ \\
 & $800$ & $4.61$ & $8.37$ & $1.8\times$ & $31\,102$ \\
\midrule
\multirow{5}{*}{\efzss} & $50$ & $0.17$ & $3.95$ & $22.7\times$ & $1\,221$ \\
 & $100$ & $0.37$ & $4.25$ & $11.6\times$ & $2\,269$ \\
 & $200$ & $0.71$ & $4.99$ & $7.0\times$ & $4\,301$ \\
 & $400$ & $1.39$ & $6.16$ & $4.4\times$ & $8\,356$ \\
 & $800$ & $2.82$ & $8.64$ & $3.1\times$ & $16\,302$ \\
\midrule
\multirow{5}{*}{\epiniss} & $50$ & $0.02$ & $1.12$ & $54.2\times$ & $164$ \\
 & $100$ & $0.04$ & $1.20$ & $28.3\times$ & $319$ \\
 & $200$ & $0.10$ & $1.37$ & $13.6\times$ & $646$ \\
 & $400$ & $0.20$ & $1.70$ & $8.5\times$ & $1\,297$ \\
 & $800$ & $0.50$ & $2.16$ & $4.3\times$ & $2\,603$ \\
\midrule
\multirow{5}{*}{\dnkss} & $50$ & $0.90$ & $16.89$ & $18.7\times$ & $10\,022$ \\
 & $100$ & $1.60$ & $18.20$ & $11.4\times$ & $18\,148$ \\
 & $200$ & $2.90$ & $21.48$ & $7.4\times$ & $33\,481$ \\
 & $400$ & $5.44$ & $28.22$ & $5.2\times$ & $61\,822$ \\
 & $800$ & $10.34$ & $39.84$ & $3.9\times$ & $114\,864$ \\
\midrule
\multirow{5}{*}{\forestss} & $50$ & $0.06$ & $90.00$ & $1609.0\times$ & $180$ \\
 & $100$ & $0.08$ & $91.36$ & $1141.0\times$ & $346$ \\
 & $200$ & $0.15$ & $91.06$ & $601.2\times$ & $689$ \\
 & $400$ & $0.31$ & $94.55$ & $304.5\times$ & $1\,371$ \\
 & $800$ & $0.65$ & $95.78$ & $147.5\times$ & $2\,734$ \\
\end{tabular}
\end{table}
We implemented all algorithms in Rust, using \texttt{petgraph} for graph operations; the pseudo-tree, its constructions (\build, \buildwc), and all dynamic variants (\dynkyen, \dynkpnc, with and without \peek) were written from scratch.
Our implementation of \yenalgo incorporates Lawler's classical modification~\cite{Lawler72}, which restricts deviation vertices of a newly generated candidate to those lying at or after the vertex at which the path being extended differs from the original, thus avoiding the regeneration of already enumerated candidates. The code was compiled with Cargo~1.93.0 in optimized  mode.
Experiments were run on a QEMU/KVM virtual machine with $32$ vCPUs and $192$\,GiB RAM, hosted on a dual-socket server with two AMD EPYC~9554 processors ($64$ cores each, $3.10$\,GHz) and $1.5$\,TiB RAM.

\blocco{Datasets}
We considered a heterogeneous collection of input graphs, whose characteristics are summarized in Table~\ref{table:input}. The core of our benchmark consists of real-world networks spanning diverse topologies and application domains taken from public repositories~\cite{nr,demetrescu2009shortest}. We focus on domains where top-$k$ simple shortest paths are a relevant primitive: road, railway, internet, and communication networks~\cite{CDD24,AkibaHNIY15,cionini_et_al:OASIcs.ATMOS.2014.46}.
The benchmark also includes unweighted instances (a directed social network and a large synthetic graph generated with the Forest Fire model~\cite{LeskovecKF07}), so that it spans both the weighted/unweighted and the directed/undirected dichotomies, which our algorithms all support.
\blocco{Experimental Protocol}
For each graph in Table~\ref{table:input}, we selected $10$ source--target pairs uniformly at random and report average results over such pairs.
For the initial solution size we considered $k \in \{50,100,200,400,800\}$, increasing by repeated doubling as customary in Algorithm Engineering~\cite{AngrimanGLMNPT19}; each top-$k$ solution was then updated to $k' \in \{3k/2,\,2k,\,4k\}$ and compared against recomputing the top-$k'$ solution from scratch.
Every execution in our pipeline is subject to a $30$-minute time limit, to keep running times manageable on the largest inputs. For each pair, we first compute a top-$k$ solution with \yenalgo, falling back to \pnc\ under the same limit if \yenalgo\ times out (had both timed out the pair would have been resampled, which never occurred). The solution is then fed separately to \dynkyen and \dynkpnc, under the same limit; each update is run only if its matching from-scratch algorithm completed in time (so every dynamic variant has its baseline), and configurations where it did not are marked unavailable.
The comparison between the two pseudo-tree construction algorithms follows the same protocol: for each sampled pair and each value of $k$, we measure the time taken by \build and \buildwc to build the pseudo-tree of the initial top-$k$ solution.
\begin{table}[t]
\centering
\renewcommand{\arraystretch}{0.8}
\setlength{\tabcolsep}{3.5pt}
\caption{\small Cumulative running times (seconds over the $10$ queried pairs) of from-scratch recomputation (\yenalgo, \pnc) and dynamic update (\dynkyen, \dynkpnc) to top-$k'$, in the reference configuration (\capheu on, \peek off, \build). For \dnkss, where \yenalgo\ times out for all $k$, only \pnc-based columns appear. Bold: fastest per graph; ``--'': unavailable (cf.\ the protocol); $\dagger$: cumulated over the pairs where the from-scratch algorithm did not time out, keeping its from-scratch and update times comparable.}
\label{table:dyn-all}
\resizebox{\textwidth}{!}{%
\begin{tabular}{c|c|rrrr|rrrr|rrrr|rrrr|rr|rrrr}
 & & \multicolumn{4}{c|}{\caidass} & \multicolumn{4}{c|}{\luxss} & \multicolumn{4}{c|}{\efzss} & \multicolumn{4}{c|}{\epiniss} & \multicolumn{2}{c|}{\dnkss} & \multicolumn{4}{c}{\forestss} \\
$k$ & $k'$ & \multicolumn{1}{c}{\yenalgo} & \multicolumn{1}{c}{\dynkyen} & \multicolumn{1}{c}{\pnc} & \multicolumn{1}{c|}{\dynkpnc} & \multicolumn{1}{c}{\yenalgo} & \multicolumn{1}{c}{\dynkyen} & \multicolumn{1}{c}{\pnc} & \multicolumn{1}{c|}{\dynkpnc} & \multicolumn{1}{c}{\yenalgo} & \multicolumn{1}{c}{\dynkyen} & \multicolumn{1}{c}{\pnc} & \multicolumn{1}{c|}{\dynkpnc} & \multicolumn{1}{c}{\yenalgo} & \multicolumn{1}{c}{\dynkyen} & \multicolumn{1}{c}{\pnc} & \multicolumn{1}{c|}{\dynkpnc} & \multicolumn{1}{c}{\pnc} & \multicolumn{1}{c|}{\dynkpnc} & \multicolumn{1}{c}{\yenalgo} & \multicolumn{1}{c}{\dynkyen} & \multicolumn{1}{c}{\pnc} & \multicolumn{1}{c}{\dynkpnc} \\
\midrule
\multirow{3}{*}{$50$} & $75$ & $1.34$ & $1.11$ & $0.19$ & $\mathbf{0.16}$ & $5.94$ & $5.49$ & $0.23$ & $\mathbf{0.17}$ & $9.16$ & $7.80$ & $2.10$ & $\mathbf{1.95}$ & $0.03$ & $\mathbf{0.03}$ & $0.91$ & $0.95$ & $3.08$ & $\mathbf{2.45}$ & $1.32$ & $\mathbf{1.07}$ & $57.60$ & $54.43$ \\
 & $100$ & $1.67$ & $1.48$ & $0.32$ & $\mathbf{0.27}$ & $6.86$ & $6.52$ & $0.29$ & $\mathbf{0.20}$ & $11.14$ & $10.21$ & $3.55$ & $\mathbf{3.39}$ & $0.05$ & $\mathbf{0.04}$ & $0.87$ & $0.81$ & $3.32$ & $\mathbf{2.65}$ & $2.08$ & $\mathbf{1.65}$ & $52.81$ & $55.14$ \\
 & $200$ & $2.99$ & $2.78$ & $0.50$ & $\mathbf{0.48}$ & $10.13$ & $9.66$ & $0.58$ & $\mathbf{0.36}$ & $19.90$ & $17.76$ & $4.06$ & $\mathbf{3.81}$ & $0.10$ & $\mathbf{0.09}$ & $0.87$ & $0.78$ & $5.21$ & $\mathbf{3.06}$ & $\mathbf{2.68}$ & $3.12$ & $54.59$ & $59.36$ \\
\midrule
\multirow{3}{*}{$100$} & $150$ & $2.22$ & $1.82$ & $0.39$ & $\mathbf{0.31}$ & $8.68$ & $8.07$ & $0.41$ & $\mathbf{0.25}$ & $14.42$ & $12.36$ & $3.93$ & $\mathbf{3.55}$ & $0.08$ & $\mathbf{0.06}$ & $1.03$ & $1.13$ & $4.52$ & $\mathbf{2.74}$ & $2.47$ & $\mathbf{1.52}$ & $55.43$ & $55.90$ \\
 & $200$ & $3.06$ & $2.67$ & $0.52$ & $\mathbf{0.45}$ & $10.05$ & $9.53$ & $0.58$ & $\mathbf{0.34}$ & $19.58$ & $16.98$ & $4.03$ & $\mathbf{3.70}$ & $0.10$ & $\mathbf{0.08}$ & $0.97$ & $0.95$ & $5.34$ & $\mathbf{3.00}$ & $2.62$ & $\mathbf{2.17}$ & $56.34$ & $57.47$ \\
 & $400$ & $6.10$ & $5.56$ & $0.96$ & $\mathbf{0.87}$ & $16.75$ & $15.52$ & $1.37$ & $\mathbf{0.62}$ & $34.20$ & $31.18$ & $6.25$ & $\mathbf{5.60}$ & $0.20$ & $\mathbf{0.18}$ & $1.08$ & $1.01$ & $12.00$ & $\mathbf{3.11}$ & $6.01$ & $\mathbf{4.71}$ & $57.32$ & $54.78$ \\
\midrule
\multirow{3}{*}{$200$} & $300$ & $4.65$ & $3.91$ & $0.83$ & $\mathbf{0.51}$ & $13.27$ & $12.39$ & $0.96$ & $\mathbf{0.43}$ & $26.89$ & $21.92$ & $5.10$ & $\mathbf{4.24}$ & $0.16$ & $\mathbf{0.10}$ & $1.00$ & $0.99$ & $8.15$ & $\mathbf{3.12}$ & $3.91$ & $\mathbf{2.73}$ & $55.25$ & $54.63$ \\
 & $400$ & $6.00$ & $5.18$ & $0.93$ & $\mathbf{0.78}$ & $16.71$ & $15.42$ & $1.37$ & $\mathbf{0.56}$ & $34.08$ & $28.32$ & $6.09$ & $\mathbf{5.25}$ & $0.21$ & $\mathbf{0.16}$ & $1.07$ & $1.07$ & $11.77$ & $\mathbf{3.39}$ & $6.24$ & $\mathbf{5.09}$ & $59.26$ & $55.30$ \\
 & $800$ & $11.57$ & $10.55$ & $1.70$ & $\mathbf{1.43}$ & $28.46$ & $25.33$ & $4.08$ & $\mathbf{1.15}$ & $58.55$ & $54.22$ & $12.34$ & $\mathbf{10.44}$ & $0.44$ & $\mathbf{0.37}$ & $1.15$ & $1.11$ & $35.23$ & $\mathbf{3.95}$ & $14.22$ & $\mathbf{12.84}$ & $60.35$ & $58.64$ \\
\midrule
\multirow{3}{*}{$400$} & $600$ & $8.68$ & $7.10$ & $1.37$ & $\mathbf{0.84}$ & $22.53$ & $19.40$ & $2.57$ & $\mathbf{0.75}$ & $45.58$ & $35.93$ & $10.49$ & $\mathbf{7.43}$ & $0.30$ & $\mathbf{0.19}$ & $1.15$ & $1.04$ & $21.95$ & $\mathbf{3.75}$ & $10.12$ & $\mathbf{6.67}$ & $60.95$ & $59.05$ \\
 & $800$ & $11.60$ & $9.82$ & $1.72$ & $\mathbf{1.14}$ & $28.93$ & $25.05$ & $4.13$ & $\mathbf{1.01}$ & $59.34$ & $46.89$ & $12.35$ & $\mathbf{8.28}$ & $0.43$ & $\mathbf{0.30}$ & $1.20$ & $1.01$ & $35.16$ & $\mathbf{3.67}$ & $14.07$ & $\mathbf{10.89}$ & $58.45$ & $60.67$ \\
 & $1600$ & $22.59$ & $19.99$ & $3.68$ & $\mathbf{2.69}$ & $52.13$ & $40.84$ & $12.39$ & $\mathbf{2.08}$ & $111.34$ & $101.49$ & $30.05$ & $\mathbf{22.52}$ & $1.09$ & $\mathbf{0.73}$ & $1.56$ & $1.27$ & $133.68$ & $\mathbf{5.42}$ & $26.03$ & $\mathbf{22.22}$ & $65.47$ & $66.19$ \\
\midrule
\multirow{3}{*}{$800$} & $1200$ & $17.67$ & $13.85$ & $2.59$ & $\mathbf{1.44}$ & $40.63$ & $32.91$ & $7.74$ & $\mathbf{1.39}$ & $88.96$ & $63.98$ & $19.03$ & $\mathbf{9.78}$ & $0.72$ & $\mathbf{0.38}$ & $1.43$ & $1.21$ & $74.83$ & $\mathbf{3.91}$ & $21.11$ & $\mathbf{13.63}$ & $62.64$ & $56.31$ \\
 & $1600$ & $22.88$ & $18.27$ & $3.72$ & $\mathbf{2.23}$ & $52.39$ & $39.93$ & $12.34$ & $\mathbf{1.84}$ & $109.38$ & $81.46$ & $29.78$ & $\mathbf{16.75}$ & $1.03$ & $\mathbf{0.66}$ & $1.55$ & $1.28$ & $133.99$ & $\mathbf{5.41}$ & $28.71$ & $\mathbf{20.42}$ & $64.33$ & $61.46$ \\
 & $3200$ & $48.46$ & $41.05$ & $8.84$ & $\mathbf{4.94}$ & $114.90$ & $66.85$ & $51.33$ & $\mathbf{3.79}$ & $211.43$ & $180.11$ & $66.42$ & $\mathbf{40.62}$ & $2.44$ & $\mathbf{1.51}$ & $2.36$ & $1.69$ & $574.91$ & $\mathbf{8.40}$ & $66.29$ & $\mathbf{49.77}$ & $86.19$ & $65.19$ \\
\end{tabular}
}
\end{table}

\begin{table}[t]
\centering
\renewcommand{\arraystretch}{0.9}
\setlength{\tabcolsep}{2.6pt}
\caption{\small Effect of the heuristics: geometric-mean factors over all $(k,k')$ by which each combination, i.e., \capheu only (\textsc{c}), \peek only (\textsc{p}), both (\textsc{c+p}), scales the mean running time of the heuristic-free base; values above $1$ denote speedup. $\dagger$: \dnkss\ \yenalgo-based factors are over $k\in\{50,100\}$ due to timeouts.}
\label{table:ablation}
\footnotesize
\begin{tabular}{c|ccc|ccc|ccc|ccc}
 & \multicolumn{3}{c|}{\yenalgo} & \multicolumn{3}{c|}{\dynkyen} & \multicolumn{3}{c|}{\pnc} & \multicolumn{3}{c}{\dynkpnc} \\
\textbf{Graph} & \textsc{c} & \textsc{p} & \textsc{c+p} & \textsc{c} & \textsc{p} & \textsc{c+p} & \textsc{c} & \textsc{p} & \textsc{c+p} & \textsc{c} & \textsc{p} & \textsc{c+p} \\
\midrule
\caidass & $1.4$ & $0.9$ & $1.2$ & $1.5$ & $0.9$ & $1.3$ & $1.2$ & $0.8$ & $1.0$ & $1.3$ & $0.8$ & $1.0$ \\
\luxss & $1.4$ & $1.0$ & $1.2$ & $1.4$ & $1.0$ & $1.2$ & $1.3$ & $0.9$ & $1.0$ & $1.6$ & $0.8$ & $1.1$ \\
\efzss & $3.4$ & $2.8$ & $3.4$ & $3.9$ & $2.9$ & $3.5$ & $1.8$ & $1.3$ & $1.4$ & $2.2$ & $1.4$ & $1.5$ \\
\epiniss & $1.0$ & $0.1$ & $0.1$ & $1.0$ & $0.1$ & $0.1$ & $1.0$ & $0.7$ & $0.7$ & $1.0$ & $0.6$ & $0.6$ \\
\dnkss$^\dagger$ & $1.1$ & $1.3$ & $1.3$ & $1.0$ & $1.3$ & $1.4$ & $1.2$ & $0.8$ & $0.9$ & $1.8$ & $0.6$ & $0.8$ \\
\forestss & $0.9$ & $0.0$ & $0.0$ & $0.9$ & $0.0$ & $0.0$ & $1.0$ & $0.4$ & $0.4$ & $1.0$ & $0.4$ & $0.4$ \\
\end{tabular}
\end{table}

For every $(k,k')$ we ran, on \build, the full $2\times 2$ matrix of \peek and \capheu on/off; \buildwc was additionally run at (\peek off, \capheu on), matching the \build inputs, to compare construction times. Unless stated otherwise, (\peek off, \capheu on) with \build is the \emph{reference configuration}, to which reported pseudo-tree sizes refer.

For each update we report the \emph{total} running time, comprising all phases, including the pseudo-tree construction and the seeding loop over its vertices (Algorithm~\ref{algo:dynamic_k}), and also measure these two components separately, alongside the from-scratch times of \yenalgo\ and \pnc. Each dynamic variant is compared against its matching from-scratch configuration (same detour subroutine and preprocessing, e.g.\ \dynkpnc vs.\ \pnc). We further collect machine-independent measures: the pseudo-tree size, the number of \expand invocations, and the number of single-pair shortest-path computations, which for the Yen-based variants equals the \expand count, whereas the \pnc-based ones use the precomputed shortest-path tree towards $t$ to avoid part of them.
We deliberately \emph{exclude} one baseline: the trivial strategy that assumes the whole internal state of the from-scratch algorithm to be kept alongside the solution, making the update immediate: one resumes \yenalgo\ (or \pnc) and extracts $k'-k$ further paths, the counterpart of the observation following Theorem~\ref{thm:dynk-time}. This presumes persisting, per pair, \emph{both} the priority queue of candidate deviations and a membership set to discard duplicates, then reloading them in full at update time. Their footprint far exceeds the solution's: the candidates generated up to the $k$-th path are well approximated by the pseudo-tree size, which in our data is $6\times$ to over $50\times$ the value of $k$ (over $31\,000$ for $k=800$ on \luxss). Such state is moreover solver-specific, whereas the pseudo-tree is solver-agnostic and derivable from the solution alone. Our algorithms thus target the practically relevant setting where only the solution is retained across queries, and our experiments quantify exactly the cost of not storing solver state, i.e., the seeding phase.
\blocco{Results}
Table~\ref{table:pseudotree-times} compares the construction times of \build and \buildwc. On the weighted graphs, \build\ is faster in every instance, with a gap ranging from a few-fold (on \luxss) to three orders of magnitude on the large dense \forestss, where \buildwc's $\Theta(n)$ initialization over two million vertices dwarfs the tiny solution-dependent work of \build, and the ratio shrinking as $k$ grows. This is fully consistent with the refined bounds of Section~\ref{sec:algorithms}: \build\ runs in $O(L)$ expected time, visiting only vertices on the input paths, whereas \buildwc\ additionally pays the $\Theta(n)$ term of its $O(n+L)$ bound to initialize its global arrays, a dominant overhead on these instances, where $L \ll n$. Hence, harmless as it is asymptotically, that term makes \build\ consistently the preferable construction in practice.
Table~\ref{table:dyn-all} compares from-scratch and dynamic computation for all $(k,k')$ pairs in the reference configuration, reporting cumulative times over the $10$ queried pairs. In $89$ of the $90$ available settings, one of the dynamic variants is the fastest method (highlighted in bold), confirming that updating an existing solution consistently outperforms recomputation; the only exception (\forestss, $k=50$, $k'=200$) combines the smallest $k$ with a large jump to $k'$, so that the fixed cost of building and seeding the pseudo-tree is not yet amortized, and even there the update trails recomputation only marginally.
On the weighted graphs the \pnc-based variants are the fastest, mirroring the behavior of the underlying static algorithms~\cite{DEmidioS26}, and the speedups of \dynkpnc\ over \pnc\ grow markedly with both $k$ and $k'$. They are most dramatic on the large road network \dnkss, where recomputing a top-$k'$ solution from scratch is up to $68\times$ slower than updating it (for $k=800$, $k'=3200$: $575$s versus $8$s), and reach $13.5\times$ on \luxss, while remaining more moderate on \caidass\ and \efzss\ (up to about $1.9\times$). On \dnkss, \yenalgo\ exceeds the $30$-min limit for all tested $k$, so only the \pnc-based methods remain viable, exactly the regime our dynamic scheme is meant to serve; for that graph we report the \pnc\ columns only. The unweighted \epiniss\ and \forestss\ reverse the usual ranking: with unit weights the shortest-path-tree guidance of \pnc\ does not pay off, so \yenalgo\ is markedly faster than \pnc\ there, and \dynkyen\ is the method of choice; the dynamic update still improves on recomputation.
Table~\ref{table:ablation} disentangles the individual effect of the two heuristics across all tested $(k,k')$. The \capheu pruning alone accelerates the search on the weighted graphs (most markedly on \efzss, with mean gains up to $3.4\times$ on \yenalgo\ and $3.9\times$ on \dynkyen) at the negligible cost of a single bounded heap, and never harms it appreciably there. The \peek preprocessing alone is far less consistent: it helps in some regimes but slows the computation down in others (factors below $1$), as the cost of its bound (two shortest-path trees) is not always repaid; combining it with \capheu is rarely meaningfully better than \capheu alone. On the unweighted \forestss\ the abundance of equal-weight paths makes the $k$-th-lightest bound loose, so \capheu's effect is mixed; on the large road graph \dnkss\ it yields consistent, if moderate, gains, while \peek, harmful to the \pnc-based methods, speeds up the \yenalgo-based ones by about $1.3\times$.
Overall, the ablation recommends \capheu on and \peek off, our reference configuration, as the best or near-best choice in the vast majority of settings; within it, the faster detour routine is graph-dependent (\dynkpnc on the weighted networks, \dynkyen on the dense unweighted \forestss).
\section{Conclusions}
\label{sec:conclusions}
We studied \spksispshort in a dynamic setting, focusing on updates of parameter~$k$. Building on the pseudo-tree representation of sets of simple $s$--$t$ paths, whose construction from an unstructured solution we cast as trie building, we introduced \dynk, a parametric scheme that updates a top-$k$ solution to a top-$k'$ one without recomputing from scratch, and we established its correctness and complexity. Experiments show that its updates are the fastest option in the large majority of settings, and often substantially so, and that the online pruning rule \capheu accelerates from-scratch and dynamic algorithms alike; \dynk can moreover replace the from-scratch fallback in the single-source algorithm of~\cite{DEmidioS26}.
Future work includes extending the framework to dynamic changes of the graph topology (a direction made natural by the pseudo-tree characterization, which compactly localizes the paths affected by a topology update) and reducing update costs by maintaining auxiliary structures across queries. A further promising direction is to speed up the update itself in the few regimes where it does not already dominate: the marginal near-ties we observed suggest room for tighter pruning bounds and smarter branching rules that curb redundant detour generation. Finally, on the experimental side, we plan to extend the evaluation to larger graphs and to a wider range of values of $k$ and $k'$.

\section*{Acknowledgment}
The authors used Claude (Anthropic) model Opus 4.8 to assist in revising the language and presentation of this paper; all technical content was devised and verified by the authors.

\bibliographystyle{abbrvnat}
\bibliography{fullversion}

\end{document}